\newcommand{\ket}[1]{\ensuremath{\left|#1\right\rangle}}
\newtheorem{thm}{Theorem}[section]
\newtheorem{prop}[thm]{Proposition}
\newtheorem{cor}[thm]{Corollary}
\newtheorem{lemma}[thm]{Lemma}
\newtheorem{scho}[thm]{Scholium}
\theoremstyle{definition}
\newtheorem{defn}[thm]{Definition}
\theoremstyle{remark}
\newtheorem{remark}[thm]{Remark}
\theoremstyle{note}
\newcolumntype{A}{D{.}{.}{2.3}}
\newcommand{\Rmnum}[1]{\expandafter\@slowromancap\romannumeral #1@}
\author{Asif Shakeel}
\email{asif.shakeel@gmail.com}
\title[Equivalence of Schr\"{o}dinger and Heisenberg Pictures in QCA] {The Equivalence of  Schr\"{o}dinger and Heisenberg Pictures in Quantum Cellular Automata}
\begin{document}



\begin{abstract}
Quantum cellular automata (QCA) are discrete models of space and time homogeneous quantum field theories (QFTs) and regarded as natural candidates for quantum simulation. Description of a QCA over the separable  Hilbert space of  finite, unbounded configurations (UFC Hilbert space) with  unitary state evolution is the {\it Schr\"{o}dinger template},  and  over the  incomplete infinite  tensor product algebra (ITPA) with evolution by algebra automorphism is  the {\it Heisenberg template}. Whether every  Heisenberg template admits an  equivalent Schr\"{o}dinger template is a foundational question, and one that has  persisted as an open problem. In the present  paper we prove that for every Heisenberg template  an equivalent  Schr\"{o}dinger template exists. We frame the question from a representation theory standpoint, using constructs and results from the representation theory of finite and countably infinite dimensional vector spaces and from category theory to answer it.  With the previously known existence of a Heisenberg template for every  Schr\"{o}dinger template, our result  establishes the equivalence of  the templates. 
\end{abstract}


\maketitle

\section{Introduction} \label{section:intro}

von Neumann~\cite{bib:vonneu1} conceived of  infinitely many  systems interacting, replicating  and  producing complex patterns of behavior. He called them {\it cellular automata}. Likewise, quantum cellular automata (QCA) are  models of  infinitely many {\it quantum} systems  interacting with each other.  Proposed by Feynman~\cite{f:spwc}, these are  natural  models of  quantum systems that would simulate other  quantum systems. Theory of QCA emerged from the initial  work on QCA of   Zeilinger~\cite{gz:qca},  Durr and Santha~\cite{dls:dpwfqca}, Watrous~\cite{a:watrous}, and  Meyer~\cite{bib:meyer2,dm:u1dqca}, the latter also bringing to fore an  important multi-particle subclass of QCA, the quantum lattice gas automaton (QLGA). These  papers establish conditions on QCA  existence and unitary evolution, and reveal their dynamics and   potential for simulating fundamental physics.

Axiomatic descriptions of QCA reflect the standard notions of quantum mechanics.  Axioms of  quantum mechanics~\cite{pd:pqm,vn:mgdq} are stated in a Hilbert space version and  a $C^*$ algebra  version.  In the Hilbert space version, the  state is a density operator, and the observables are bounded self-adjoint operators on the Hilbert space.  In the  $C^*$ algebra version, the state is a normalized linear functional on the  algebra.   Within the Hilbert space formulation, of abiding interest  are separable Hilbert spaces. They  have played a primal role in  quantum mechanics, and Gleason's crucial results~\cite{g:mcshs}  on quantum measurements   also rest on them. von Neumann and Murray~\cite{tv:hcwrop} develop rings of operators, {\it von Neumann algebras}, formalism of quantum mechanics.  Along with $C^*$ algebras from the ensuing work by Gelfand and Neumark~\cite{GelNeu43}, and Segal~\cite{segal1947}, these  algebras maintain an enduring significance in  algebraic quantum field theory (QFT) and infinite quantum systems.

 Schumacher and Werner~\cite{sw:rvqca} proposed an axiomatic model of a QCA in the operator formalism. A   QCA  resides on a lattice, the cells being  identified with the points of the lattice.    Each cell takes values in a finite dimensional $C^*$ algebra of operators,  the {\it cell algebra},  which is thus the endomorphisms of a finite dimensional Hilbert space.   In it, the cell algebras  are constituents of an  \textit{incomplete infinite  tensor product algebra}~\cite{vn:idp,js:its,ag:shrt,br:oaqsm1} (ITPA), a $^*$-aglebra of operators over the lattice. The system evolves at each time step through an  automorphism or this  $^*$-algebra. To be a valid model of physics, such an automorphim is to be  translation invariant, i.e., spatially homogeneous, acting identically on all cells. It must also  obey  {\it causality}, referring to a finite speed of information transfer, in  that  the image of a cell algebra under the automorphism is the identity operator on all but  a finite {\it neighborhood} of the cell.  This evolution of  the operator algebra is deemed to be in the Heisenberg picture of quantum mechanics.  It has led to  a classification of one-dimensional QCA by an {\it index theory} in the work of Gross, Nesme, Vogts, and Werner~\cite{gnvw:itodqwca}. We refer to this QCA model including its evolution as the  {\it Heisenberg  template} of QCA.

 Arrighi, Nesme, and Werner~\cite{anw:odqca} describe  an axiomatic  QCA in the Hilbert space formalism. Each cell  takes values in the {\it states} of a finite dimensional Hilbert space,  the {\it cell Hilbert space}.  A  typical basis element  of the QCA Hilbert space  consists of finitely many cells taking arbitrary values in the cell Hilbert space with a background of cells taking a designated identical {\it quiescent} value. The QCA Hilbert space is called  the {\it Hilbert space of finite, unbounded configurations} (UFC Hilbert space). The state of the QCA evolves at each time step by a unitary, translation-invariant and {\it causal}   operator. Causality, in this description, refers to a finite speed of information propagation at each step in the sense of evolution of the state: the state of each cell  after a step of evolution depends on that of the cells in its  finite neighborhood  prior to the step. This QCA, because of   unitary evolution, is in the Schr\"{o}dinger picture of quantum mechanics.  Among  elegant results related to this formalism is the work of Arrgihi, Nesme and Werner on localizabitlity of quantum cellular automata (QCA)~\cite{anw:ucil}. We refer to this QCA model including its evolution as the  {\it Schr\"{o}dinger  template} of QCA.

Templates of QCA serve as primary points of investigations concerning properties, structures  and classes of QCA that are potentially significant for quantum algorithms and simulations. Beyond classification of  QCA structure, much work has gone into modeling physics with QCA,  their multi-particle subclass the QLGA, and single-particle subclass, the quantum walks (QWs). Meyer~\cite{bib:meyer2} modeled Dirac equation in $1+1$ dimensions with QW, and Arrgihi et al~\cite{adma:deqwhctl}  in  $2+1$ dimensions. Boghosian and Taylor~\cite{bt:qlgmpsed} developed simulation models for many particle Schr\"{o}dinger equation based on QLGA.  QWs incorporating curved spacetime  appear in Arrighi et al~\cite{agf:qwcst}, and Di Molfetta et al~\cite{mbd:qwmdfcst}. Bisio et al~\cite{bat:qfqcad,bapt:thqca} develop models of  quantum field theory (QFT) as QCA. Arrgihi et al~\cite{am:qcgd} combine QCA with reversible causal graph dynamics as a framework for discrete time quantum gravity models. 
Meyer and Shakeel~\cite{bib:msqcawp}, seeking ways to simulate physics without particle description, combine QLGA to create QCA that no longer have a particle description at the  time scale  at which the dynamics are homogeneous.      Schlingemann et al~\cite{svw:oscqca} introduced Clifford QCA (CQCA) and investigated their structure, properties and generalizations.   G\"utschow et al~\cite{guwz:aegcqca} studied CQCA further for their entanglement generation properties. Crossing over into several areas, CQCA are  illustrative  of QCA intersecting fundamental physics, quantum computation and quantum resource theory. Interestingly, CQCA, described in the operator formalism, do not  readily have a  particle description in the sense of, for instance, QLGA, or  without particles  in the sense of the QCA of~\cite{bib:msqcawp}.  Interested reader can find in~\cite{pa:aoqca} a  comprehensive contemporary perspective and the   
significance of quantum cellular automata (QCA) as models of computation and simulation, their structural interpretations and analysis.

In their  paper on QFT simulation, Jordan et al~\cite{bib:JLP} calculate in polynomial time the  scattering amplitudes of particles in $\phi^4$  theory using a  quantum gate array architecture.  A question, raised in~\cite{bib:msqcawp}, is about the efficiency attainable in simulating  physics without a particle description. QCA are contenders among those simulation models, and  bringing the techniques available in operator formalism to bear on questions of this magnitude may be necessary.  Physical considerations, especially  implementation,  would  naturally demand that there be  a separable Hilbert space description with a  unitary state evolution. A result establishing that  a Schr\"{o}dinger template exists  equivalent to  any given  Heisenberg template of a QCA links the two approaches.  

It is a natural question to ask whether the templates are equivalent in the sense that each has a counterpart in the other with mirroring evolutionary dynamics. By this we mean that the measurement statistics are the same under evolution in either template, as in the Schr\"{o}dinger and Heisenberg pictures of quantum mechanical systems.  Given a Schr\"{o}dinger template over any lattice, we can always find a  Heisenberg template that has equivalent dynamics  by previous results mainly from~\cite{anw:odqca}.  Also, given a  Heisenberg  template over a {\it finite} sized lattice, we can always find a  Schr\"{o}dinger template that yields equivalent dynamics, by the Skolem-Noether Theorem for finite dimensional simple algebras, the relevant version of which is  cited as   Scholium~\ref{scholium} (Appendix~\ref{apprt}). There is no general theorem that directly gives this equivalence for countably infinite  dimensional vector spaces, therefore, it was not known if, given a Heisenberg template over an infinite lattice, an equivalent Schr\"{o}dinger  template exists. In this paper we  show that  there is an equivalent Schr\"{o}dinger template for every Heisenberg template. In fact, there is such a template for any UFC Hilbert space in whose bounded linear operators the  ITPA of the Heisenberg template is naturally embedded.
 
  This paper is organized as follows. Section~\ref{section:qcasec} describes the Schr\"{o}dinger and the Heisenberg templates of QCA. 
In  Section~\ref{section:intqca}  we first recall, without explicitly involving representation theory,  the known result that given a Schr\"{o}dinger template of a QCA, a dynamically equivalent Heisenberg  template can be constructed. This leads to the formal definition of equivalence   between the Schr\"{o}dinger and Heisenberg templates of  QCA in representation theoretic terms. Then we present the main result of this paper that given any Heisenberg template, an equivalent  Schr\"{o}dinger template exists for any UFC Hilbert space in whose bounded linear operators the ITPA of the Heisenberg template embeds.  Section~\ref{appl} examines the definition of a  QLGA in~\cite{sl:wqcqlg}, and its characterizing local condition, informed by the  results of this paper.
  Section~\ref{concl} is the conclusion.

\section{Quantum Cellular Automaton} \label{section:qcasec}
We recall the definition of Quantum Cellular Automaton (QCA) as it is defined in~\cite{anw:odqca} in the Hilbert space formulation, the  Schr\"{o}dinger template, and in the operator algebra formulation, the  Heisenberg template. 
\subsection*{The  Schr\"{o}dinger Template}
The  Schr\"{o}dinger template  in this paper is  a  slight generalization of the definition in~\cite{anw:odqca, sl:wqcqlg} in terms of spatial homogeneity, i.e.,  translation invariance.  Let us describe the Hilbert space of the QCA. 
Over each cell resides an identical finite dimensional complex Hilbert space, the {\it cell Hilbert space}, $W$.  We designate  a state $\ket{q} \in W$ as the \textit{quiescent} state,     and choose some orthonormal basis $\mathcal{B}$ of $W$ containing the quiescent state $\ket{q} \in \mathcal{B}$.  Each  basis element of the QCA Hilbert space is a tensor product, over the lattice, of basis elements from $\mathcal{B}$, in which  except for a finitely many cells, the rest are in the quiescent  state.  This basis is called the  \textit{set of finite, unbounded configurations},  denoted by ${\mathcal{C}}$, defined to be 
\begin{equation} \label{sofc} 
\mathcal{C} = \{ \bigotimes_{x \in \mathbb{Z}^n} \vert b_x \rangle :  \vert b_x \rangle \in \mathcal{B},  \text{ all but finite }\vert b_x \rangle   = \ket{q}  \} \:.
\end{equation}
Let the vector space spanned by $\mathcal{C}$ be $\mathbb{V}$,
\begin{equation} \label{Vfb}
\mathbb{V}=\text{Span}\big(\mathcal{C}\big) \:.
\end{equation} 

We can obtain $\mathbb{V}$ in other ways as well. Let an ascending chain of finite subsets be  $\{D_k \subset \mathbb{Z}^n : |D_k| < \infty\}_{k\in \mathbb{N}}$, such that  their  union is the entire lattice, i.e.,
\begin{equation} \label{Dkchain}
D_0 \subsetneq D_1 \subsetneq \ldots 
\end{equation}
such that $\mathbb{Z}^n = \cup_{k \in \mathbb{N}} D_k$.

Let
\begin{equation*}
W_{D_k} = \text{Span} (\{\bigotimes_{x \in D_k} \ket{b_x} \otimes  \bigotimes_{x \in \mathbb{Z}^n \setminus D_k} \ket{q} : \ket{b_x} \in \mathcal{B}\})\:.
\end{equation*}
 Under the natural   inclusion of $W_{D_k}$ into $W_{D_{k+1}}$,
\begin{equation*}
W_{D_k} \hookrightarrow W_{D_{k+1}} \:.
\end{equation*}
$W_{D_k}$ can be regarded as a subspace of $W_{D_{k+1}}$.
Then 
\begin{equation*} 
\mathbb{V} = \cup_{k} W_{D_k} \:,
\end{equation*} 
Another precise constrcution of  $\mathbb{V}$ is as  a direct limit in the manner  of Guichardet~\cite{ag:shrt}. The direct limit construction is in Appendix~\ref{backres}, and is  instructive for the rest of this paper.
The {\it Hilbert space of finite, unbounded configurations} (UFC Hilbert space), denoted by $\mathcal{H}$, is the completion of $\mathbb{V}$ under the inner product norm, i.e., under the  pre-Hilbert structure  on $\mathbb{V}$ induced  from the inner product on $W$.

The state of a QCA is   a  \textit{density operator} on  the UFC Hilbert space $\mathcal{H}$. A density operator is a positive trace class operator with trace $1$.  For the measurements that only concern  a finite  subset $D \subset \mathbb{Z}^n$ of cells,  the density operator can be {\it restricted} to $D$. This restriction is obtained by  a  \textit{partial trace} over cells (tensor factors) not in $D$. For that, let us write the  $\mathcal{H}$ as a tensor product of two Hilbert spaces, where completion in the inner product norm induced by those on the two tensor factors is assumed.
We can write,
\begin{equation} \label{VDdef}
\mathbb{V} = \big( \bigotimes_{x \in D} W \big)\otimes \mathbb{V}_{\bar{D}} \:,
\end{equation}
where $\mathbb{V}_{\bar{D}}$ is the vector space complement to the tensor factors in $D$.
Also,
\begin{equation*}
\mathcal{H} = \big( \bigotimes_{x \in D} W \big)\otimes \mathcal{H}_{\bar{D}} \:,
\end{equation*}
where $\mathcal{H}_{\bar{D}}$ is the Hilbert space completion of $\mathbb{V}_{\bar{D}}$. Let $\{ \vert v_i \rangle \}$ be some orthonormal basis  of $\mathcal{H}_{\bar{D}}$. The restricted density  operator is  calculated by first writing $\rho$ as
\begin{equation*}
\rho = \sum_{i,j} \rho_{i,j} \otimes \vert v_i \rangle \langle v_j \vert \:,
\end{equation*}
where $\rho_{i,j} \in  \begin{rm}{End}\end{rm}(\bigotimes_{x \in D} W)$.  Then $\rho$ restricted to $D$ is
\begin{equation*}
\rho |_{D} := \sum_{i} \rho_{i,i} \:.
\end{equation*}

Phase factors multiplying unitary evolution operators have no effect on measurements, so we are justified in the following definition.
\begin{defn} \label{classeq}
Let $R$ and  $R'$ be  operators on a UFC Hilbert space. Then $R$ and  $R'$ are {\it equivalent} if $R  = e^{i\theta} R'$ for some $\theta \in \mathbb{R}$. The equivalence class of $R$ is denoted  by $[R]$.
\end{defn}
We will refer to  $[R]$ as an operator on a UFC Hilbert space  when it is appropriate not to distinguish among the members of the class. When we say  $[R]$ is a unitary operator, we mean some  (class) representative $R$ of $[R]$ is a unitary operator. By the definition of $[R]$ then all representatives of $[R]$ are unitary.
In keeping with this equivalence, a  concept that we will generalize  from  the model  of QCA in~\cite{anw:odqca, sl:wqcqlg}  is that of spatial homogeneity, i.e.,  translation invariance,  in the definition of the Schr\"{o}dinger template.

The   {\it  neighborhood} of  cell $z \in \mathbb{Z}^n$ is a finite set,   denoted by $\mathcal{N}_z \subset \mathbb{Z}^n$. The state of the cell $z$ (restriction of the state to the cell) after a step of evolution depends on the state of the neighborhood $\mathcal{N}_z$ prior  the step, in the sense of  the Schr\"{o}dinger template definition below. For spatially homogeneous evolution, it is given by  translating a fixed finite subset  $\mathcal{N} \subset \mathbb{Z}^n$ to the cell $z$,
\begin{align} \label{nbhddef}
\mathcal{N}_z = z+ \mathcal{N} = \{z+y : y \in \mathcal{N}\} \:.
\end{align} 
 Then $\mathcal{N}$ is called the {\it neighborhood} for the QCA.
\begin{defn} [Schr\"{o}dinger template] \label{qcadef}
 Let   $W$ be a finite dimensional Hilbert space.  The  {\it Schr\"{o}dinger template} of a QCA is a pair ($\mathcal{H}$, $[R]$), where ${\mathcal{H}}$ is  a UFC Hilbert space constructed from $W$, and  $[R]$ is a  unitary operator on ${\mathcal{H}}$. $[R]$ is required to be
\begin{enumerate}
\item \label{transinvqca} \textit{Translation invariant}:  A translation operator $\tau_z$, for some $z \in \mathbb{Z}^n$, is a linear operator on  ${\mathcal{H}}$,  defined on a basis element $\vert c \rangle = \bigotimes_{x \in \mathbb{Z}^n} \vert b_x\rangle \in {\mathcal{C}}$ as:
\begin{equation*}
\tau_z:  \bigotimes_{x \in \mathbb{Z}^n} \vert b_x \rangle  \mapsto  \bigotimes_{x \in \mathbb{Z}^n} \vert b_{x+z} \rangle \:.
\end{equation*}
$[R]$ is \textit{translation invariant} if, for some  representative $R$ of $[R]$, and for all $z \in  \mathbb{Z}^n$,  $\tau_z   R   \tau^{-1}_z = e^{i \theta_z} R $, where $\theta_z$ depends on $z$.  
\item \label{causalqca} \textit{Causal relative to  some neighborhood $\mathcal{N}$}: 
$[R]$ is  \textit{causal} relative to the  neighborhood $\mathcal{N}$ if  for some  representative $R$ of $[R]$,  all  $z \in \mathbb{Z}^n$, and each pair $\rho, \rho'$  of   density operators on ${\mathcal{H}}$ satisfying
\begin{equation*}
\rho|_{\mathcal{N}_z} = \rho'|_{\mathcal{N}_z} \:,
\end{equation*}
$R \rho R^\dag, R\rho' R^\dag$  satisfy
\begin{equation*}
R \rho R^\dag |_z =  R \rho' R^\dag |_z  \:.
\end{equation*}
\end{enumerate}
 \end{defn}
The neighborhood $\mathcal{N}$ is typically chosen to be the smallest set that satisfies the causality condition given in  the   Schr\"{o}dinger template definition.  $[R]$ is called the  {\it global evolution operator} of the QCA. If the state of QCA is $\rho$ before a time step, then after the time step the state  evolves to  $R\rho R^\dag$, where $R$ is some representative of $[R]$.

\subsection*{The Heisenberg Template} The Heisenberg template of a QCA is defined on an algebra called the {\it incomplete infinite   tensor product algebra}. This algebra  is constructed from algebras that are {\it local} on finite subsets of the lattice $\mathbb{Z}^n$,  defined as follows.
\begin{defn} \label{localD}
Let  $W$ be a finite dimensional Hilbert space.  For a  finite subset  $D \in \mathbb{Z}^n$, the algebra of operators local upon $D$ is defined to be  $\mathcal{A}_D = \bigotimes_{x \in D}\begin{rm}{End}\end{rm}(W) \otimes \bigotimes_{y \in\mathbb{Z}^n \setminus D} \mathbb{I}$ \:. 
\end{defn}
\begin{remark} We denote identity operator by  $\mathbb{I}$ for every vector space, to economize  notation.
\end{remark}
\begin{remark} \label{csrem}{We write  $\mathcal{A}_z$ to mean  $\mathcal{A}_{\{z\}}$, the {\it cell algebra} of operators local upon one tensor factor $z$}.
\end{remark} 
Take an ascending chain of finite subsets as in~\eqref{Dkchain}.
 Under the obvious inclusion of $\mathcal{A}_{D_k}$ into $\mathcal{A}_{D_{k+1}}$ \:,
\begin{align*}
\mathcal{A}_{D_k} &\hookrightarrow \mathcal{A}_{D_{k+1}} \:,
\end{align*}
$\mathcal{A}_{D_k}$ can be regarded as a subalgebra of $\mathcal{A}_{D_{k+1}}$.
The  {\it incomplete infinite   tensor product algebra} (ITPA), which we  denote by $\mathcal{Z}$,  is
\begin{equation} \label{Zdef}
\mathcal{Z} = \cup_{k} \mathcal{A}_{D_k} \:.
\end{equation} 
Just as for the vector space of finite configurations~\eqref{Vaslim}, we can obtain $\mathcal{Z}$ as  direct limit of $\mathcal{A}_{D_k}$ (see Guichardet~\cite{ag:shrt}).


We note that  $\mathcal{Z}$ is  a $^*$-algebra since  each of its  constituent cell algebras,  $\mathcal{A}_z =  \begin{rm}{End}\end{rm}(W)$, is a  $^*$-algebra with the  adjoint map on operators as its  $^*$-involution,
\begin{align*}
  ^*: \begin{rm}{End}\end{rm}(W) &\rightarrow \begin{rm}{End}\end{rm}(W) \\
  a &\mapsto a^\dag \:.
\end{align*} 
This map extends to  a $^*$-involution on $\mathcal{Z}$.

Based on the constructs just introduced, we define the Heisenberg template of a QCA. 
 \begin{defn} [Heisenberg  template] \label{hqcadef}
Let  $W$ be a finite dimensional Hilbert space. The  {\it Heisenberg template} of a  QCA  is the  is  a pair ($\mathcal{Z}$, $\gamma$), where  $\mathcal{Z}$ is the ITPA~\eqref{Zdef}, and $\gamma$ is a   $^*$-automorphism     of $\mathcal{Z}$. $\gamma$ is required to be:
\begin{enumerate} 
\item \label{ithm1} \textit{translation invariant}: A translation operator $\mu_z$, for some $z \in \mathbb{Z}^n$, is a linear operator on $\mathcal{Z}$, defined on a basis  element $ b = \bigotimes_{x \in \mathbb{Z}^n}  b_x \in {\mathcal{Z}}$ as:
\begin{equation*}
\mu_z:  \bigotimes_{x \in \mathbb{Z}^n}  b_x   \mapsto  \bigotimes_{x \in \mathbb{Z}^n}  b_{x+z} \:. 
\end{equation*}
$\gamma$ is \textit{translation invariant} if  $\mu_z   \gamma   \mu^{-1}_z = \gamma $ for all $z \in  \mathbb{Z}^n$. 
\item \label{ithm2} \textit{Causal relative to some neighborhood $\mathcal{N}$}:  For every  element $A_z \in \mathcal{Z}$ local upon  $z$, $\gamma(A_z)$ is local upon  $\mathcal{N}_z$ (see Definition~\ref{localD}). 
\end{enumerate}
At each time step of evolution, $b \in \mathcal{Z}$ evolves to $\gamma(b)$.  
\end{defn}
Just as in the Schr\"{o}dinger template, Definition~\ref{qcadef}, the  neighborhood $\mathcal{N}$ is typically chosen to be the smallest subset of $\mathbb{Z}^n$ satisfying the causality condition in the above definition.

\section{Equivalence of the Schr\"{o}dinger and Heisenberg Templates of QCA} \label{section:intqca}
In this section, we address  the relation between  the  Schr\"{o}dinger and Heisenberg templates.  First, let us assume that we are given a Schr\"{o}dinger template $(\mathcal{H}, [R])$, with neighborhood $\mathcal{N}$ as in Definition~\ref{qcadef}, where $\mathcal{H}$ is constructed from a finite dimensional Hilbert space $W$. Then the ITPA $\mathcal{Z}$ constructed from $\text{End}(W)$ is embeddeded in the algebra of bounded linear operators $B(\mathcal{H})$. The following useful result  (Theorem $3.9$ from~\cite{sl:wqcqlg}) says that  conjugation of $\mathcal{Z}$ by $R$ is an automorphism of  $\mathcal{Z}$. 
\begin{thm}\label{thmsubalg}
Let  $R$ be a unitary and causal operator on $\mathcal{H}$ relative to some neighborhood $\mathcal{N}$.  Then for every $x\in \mathbb{Z}^n$,  
\begin{equation*}
\mathcal{A}_x \subset  \begin{rm}{span}\end{rm}(\prod_{y \in \mathcal{N}}R^\dag \mathcal{A}_{x-y} R) \:.
 \end{equation*}
In particular $ \mathcal{Z} = R^\dag\mathcal{Z}R$.

\end{thm}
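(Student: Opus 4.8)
The plan is to convert the causality axiom into its Heisenberg form and then to exploit that the ITPA acts irreducibly on $\mathcal{H}_q$. The first thing I would do is extract from causality the Heisenberg statement that for every $z \in \mathbb{Z}^n$,
\[
R^\dag \mathcal{A}_z R \subseteq \mathcal{A}_{\mathcal{N}_z}.
\]
For $a \in \mathcal{A}_z$ and any density operator $\rho$ one has $\operatorname{Tr}\big((R^\dag a R)\rho\big) = \operatorname{Tr}\big(a\,R\rho R^\dag\big)$, which, because $a \in \mathcal{A}_z$, is a function of $R\rho R^\dag|_z$ only and hence, by causality, a function of $\rho|_{\mathcal{N}_z}$ only. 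It then remains to invoke the elementary fact that a bounded operator $b$ whose expectation $\operatorname{Tr}(b\rho)$ depends only on $\rho|_D$ lies in $\mathcal{A}_D$; this I would prove by writing $\mathcal{H}_q = W_D \otimes \mathcal{H}_{\bar D}$, evaluating on product states $\sigma \otimes |w\rangle\langle w|$ to conclude that $\langle w|b|w\rangle \in \operatorname{End}(W_D)$ is independent of the unit vector $w \in \mathcal{H}_{\bar D}$, and polarizing to obtain $b = b_0 \otimes \mathbb{I}$.

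The second ingredient, which I expect to be the real obstacle, is a commutant lemma: for a finite set $D$, the subalgebra of $\mathcal{Z}_W$ generated by $\{\mathcal{A}_y : y \in \mathbb{Z}^n \setminus D\}$ acts irreducibly on $\mathcal{H}_{\bar D}$ (via $\mathcal{H}_q = W_D \otimes \mathcal{H}_{\bar D}$), so its commutant in $B(\mathcal{H}_q)$ equals $\mathcal{A}_D$. Assuming $\dim W \ge 2$ (the case $\dim W = 1$ is trivial), I would prove irreducibility by showing that any $T$ commuting with every $\mathcal{A}_y$, $y \notin D$, commutes in particular with the matrix units of arbitrarily large $\mathcal{A}_E$: this first forces $T$ to be diagonal in the configuration basis of $\mathcal{H}_{\bar D}$ and then forces its diagonal entries to coincide, because any two finite configurations are linked by a finite chain of single-cell changes passing through finite configurations. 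The subtlety here is exactly that $\mathcal{H}_q$ is a completion of an \emph{incomplete} tensor product and that the ITPA does \emph{not} contain the rank-one configuration projections, so the diagonalization must go through the matrix units rather than those projections.

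With both ingredients available, I would fix $x \in \mathbb{Z}^n$ and $a \in \mathcal{A}_x$ and show that $R a R^\dag \in \mathcal{A}_{x-\mathcal{N}}$, where $x - \mathcal{N} = \{x-k : k \in \mathcal{N}\}$. By the commutant lemma it suffices to check that $R a R^\dag$ commutes with $\mathcal{A}_y$ for every $y \notin x - \mathcal{N}$, i.e.\ that $a$ commutes with $R^\dag \mathcal{A}_y R$; but the first step gives $R^\dag \mathcal{A}_y R \subseteq \mathcal{A}_{\mathcal{N}_y} = \mathcal{A}_{y+\mathcal{N}}$, and $y \notin x - \mathcal{N}$ is equivalent to $x \notin y + \mathcal{N}$, so $\{x\} \cap (y+\mathcal{N}) = \emptyset$ and $\mathcal{A}_x$ commutes with $\mathcal{A}_{y+\mathcal{N}}$. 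Since $\mathcal{A}_{x-\mathcal{N}} = \bigotimes_{k\in\mathcal{N}}\operatorname{End}(W_{x-k})$ is spanned by products $\prod_{k\in\mathcal{N}} a_k$ with $a_k \in \mathcal{A}_{x-k}$, and the factors $RR^\dag$ in the interior of any product $\prod_{k\in\mathcal{N}}(R^\dag a_k R)$ cancel, conjugating back yields
\[
a = R^\dag (R a R^\dag) R \in \operatorname{span}\Big(\prod_{k\in\mathcal{N}} R^\dag \mathcal{A}_{x-k} R\Big),
\]
which is the stated inclusion. For the final assertion, the $\mathcal{A}_x$ generate $\mathcal{Z}_W = \bigcup_k \mathcal{A}_{D_k}$, so this inclusion gives $\mathcal{Z}_W \subseteq R^\dag \mathcal{Z}_W R$ (the right-hand side is an algebra, conjugation by a unitary being a $*$-automorphism of $B(\mathcal{H}_q)$), while $R^\dag \mathcal{Z}_W R \subseteq \mathcal{Z}_W$ follows at once from $R^\dag \mathcal{A}_z R \subseteq \mathcal{A}_{\mathcal{N}_z} \subseteq \mathcal{Z}_W$; hence $\mathcal{Z}_W = R^\dag \mathcal{Z}_W R$.
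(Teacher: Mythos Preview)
The paper does not supply its own proof of this statement: it is quoted verbatim as Theorem~3.9 of \cite{sl:wqcqlg}, and the closely related Structural Reversibility theorem (Theorem~\ref{strucreva}) is likewise imported from \cite{anw:odqca}. So there is no in-paper argument to compare against.

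Your argument is correct, and in fact it essentially reconstructs the implication $\text{(i)}\Rightarrow\text{(iv)}$ of Theorem~\ref{strucreva} en route. Two remarks. First, once you have your commutant lemma and have shown $R a R^\dag \in \mathcal{A}_{x-\mathcal{N}}$, you have exactly condition~\ref{strv4} of Theorem~\ref{strucreva}; the displayed inclusion then follows immediately from $R^\dag \mathcal{A}_{x-\mathcal{N}} R = \operatorname{span}\big(\prod_{k\in\mathcal{N}} R^\dag \mathcal{A}_{x-k} R\big)$, which is just the observation that conjugation by a unitary is multiplicative. Second, the commutant lemma you isolate (that the ITPA on $\mathbb{Z}^n\setminus D$ has commutant $\mathcal{A}_D$ in $B(\mathcal{H}_q)$) is indeed the heart of the matter; your matrix-unit argument is the standard one, and it can be phrased slightly more cleanly by first using that commuting with $\mathcal{A}_E=\operatorname{End}(W_E)\otimes\mathbb{I}$ for each finite $E\subset\mathbb{Z}^n\setminus D$ forces $T\in\mathbb{I}_E\otimes B(\mathcal{H}_{\bar E})$, and then noting that the complement ITPA acts irreducibly on $\mathcal{H}_{\bar D}$ by the same proof as for $\mathcal{Z}_W$ on $\mathcal{H}_q$.
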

Let the automorphism  $\gamma_{[R]}$ be,
 \begin{align} \label{gammaR}
\gamma_{[R]}: \mathcal{Z} &\longrightarrow  \mathcal{Z}\\
b  &\mapsto R^\dag b R \:,  \nonumber 
\end{align}
where $R$ is some representative of $[R]$.  $\gamma_{[R]}$  is a $^*$-automorphism as it commutes with the $^*$-involution of $\mathcal{Z}$ (adjoint map). 

We recall a theorem from~\cite{anw:odqca} deriving  the counterpart of causality  in the Heisenberg template from that in the Schr\"{o}dinger template.
Let the  {\it reflected neighborhood} of cell $z$ be $\mathcal{V}_z$, 
\begin{align*} 
\mathcal{V}_z = z- \mathcal{N} \:.
\end{align*} 
\begin{thm}[Structural Reversibility, Theorem $3$ in~\cite{anw:odqca}]  \label{strucreva}
  Let $R : \mathcal{H} \longrightarrow \mathcal{H}$ be a unitary  operator  and $\mathcal{N}$ a neighborhood.  Then the following are equivalent:
 \begin{enumerate}
\item \label{strv1} $R$ is causal relative to the  neighborhood $\mathcal{N}$. 
\item \label{strv2}  For every  operator $A_z$ local upon cell $z$, $R^\dag A_z R$ is local upon  $\mathcal{N}_z$.
\item \label{strv3} $R^\dag$ is causal relative to the  neighborhood $\mathcal{V}$. 
\item \label{strv4}  For every  operator $A_z$ local upon cell $z$, $R A_z R^\dag$ is local upon  ${\mathcal{V}}_z$.
\end{enumerate}   
\end{thm}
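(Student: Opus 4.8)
The plan is to establish the cyclic chain of implications $\ref{strv1} \Rightarrow \ref{strv2} \Rightarrow \ref{strv3} \Rightarrow \ref{strv4} \Rightarrow \ref{strv1}$, exploiting the symmetry between $R$ and $R^\dag$ and between $\mathcal{N}$ and $\mathcal{V} = -\mathcal{N}$. Observe first that $\ref{strv3}$ is exactly $\ref{strv1}$ with $R$ replaced by $R^\dag$ and $\mathcal{N}$ replaced by $\mathcal{V}$, and likewise $\ref{strv4}$ is $\ref{strv2}$ under the same substitution (note $(\mathcal{V})_z = z - \mathcal{V} = z + \mathcal{N} = \mathcal{N}_z$ so the roles swap cleanly). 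Hence it suffices to prove the single implication $\ref{strv1} \Rightarrow \ref{strv2}$ and the single implication $\ref{strv2} \Rightarrow \ref{strv1}$; applying each with $(R,\mathcal{N})$ replaced by $(R^\dag,\mathcal{V})$ then yields $\ref{strv3} \Leftrightarrow \ref{strv4}$, and one more bridging step (say $\ref{strv2} \Rightarrow \ref{strv3}$, or equivalently observing $\ref{strv1}\Leftrightarrow\ref{strv2}$ applied to $R^\dag$) closes the loop.

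For $\ref{strv1} \Rightarrow \ref{strv2}$: I would use the characterization of locality of an operator $B$ upon a region $D$ in terms of partial traces, dually to the density-operator formulation of causality. Concretely, $B \in \mathcal{A}_D$ iff for every density operator $\rho$ the value $\operatorname{tr}(B\rho)$ depends only on $\rho|_D$; equivalently, $\operatorname{tr}(B\rho) = \operatorname{tr}(B\rho')$ whenever $\rho|_D = \rho'|_D$. Take $B = R^\dag A_z R$ with $A_z \in \mathcal{A}_z$. Given $\rho, \rho'$ with $\rho|_{\mathcal{N}_z} = \rho'|_{\mathcal{N}_z}$, causality \ref{strv1} gives $R\rho R^\dag|_z = R\rho' R^\dag|_z$, hence $\operatorname{tr}(A_z \cdot R\rho R^\dag) = \operatorname{tr}(A_z \cdot R\rho' R^\dag)$, i.e. $\operatorname{tr}(R^\dag A_z R\,\rho) = \operatorname{tr}(R^\dag A_z R\,\rho')$. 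Since this holds for all such $\rho,\rho'$, the trace-duality characterization forces $R^\dag A_z R \in \mathcal{A}_{\mathcal{N}_z}$, which is \ref{strv2}. The converse $\ref{strv2} \Rightarrow \ref{strv1}$ runs the same computation backwards: if $\rho|_{\mathcal{N}_z} = \rho'|_{\mathcal{N}_z}$ then for every $A_z \in \mathcal{A}_z$, using that $R^\dag A_z R$ is local upon $\mathcal{N}_z$, $\operatorname{tr}(A_z\, R\rho R^\dag) = \operatorname{tr}(R^\dag A_z R\,\rho) = \operatorname{tr}(R^\dag A_z R\,\rho') = \operatorname{tr}(A_z\, R\rho' R^\dag)$; since the functionals $A_z \mapsto \operatorname{tr}(A_z \sigma)$ as $A_z$ ranges over $\mathcal{A}_z = \operatorname{End}(W)$ separate the single-cell restrictions, we conclude $R\rho R^\dag|_z = R\rho' R^\dag|_z$.

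The main obstacle is purely technical rather than conceptual: making the trace-duality characterization of locality rigorous on the infinite-dimensional $\mathcal{H}_q$, where $B = R^\dag A_z R$ is bounded but not trace class, and where "$\operatorname{tr}(B\rho)$ depends only on $\rho|_D$" must be interpreted through the partial-trace / conditional-expectation structure. One clean way is to phrase locality of $B$ as: $\operatorname{tr}(B(\rho - \rho')) = 0$ for all density operators $\rho, \rho'$ agreeing on $D$ — and then note it suffices to test against differences $\rho - \rho'$ spanned by $\ket{c}\bra{c'} - \ket{d}\bra{d'}$ on finite configurations, reducing everything to finite tensor factors where ordinary finite-dimensional trace duality applies, together with the fact (Theorem~\ref{thmsubalg}) that $R^\dag A_z R$ indeed lies in $\mathcal{Z}_W$ so it is supported on finitely many cells to begin with. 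I would also need the elementary but essential fact that $D \mapsto \mathcal{A}_D$ and the restriction map $\rho \mapsto \rho|_D$ are adjoint in this pairing, i.e. $\operatorname{tr}_D(B\,\rho|_D) = \operatorname{tr}(B\rho)$ for $B \in \mathcal{A}_D$; once this bookkeeping is set up, the four implications are each a two-line computation.
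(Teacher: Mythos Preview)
The paper does not actually prove Theorem~\ref{strucreva}; it is stated without proof and attributed to Arrighi, Nesme, and Werner~\cite{anw:odqca}. So there is no in-paper argument to compare your proposal against.

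Assessing your proposal on its own merits: your treatment of $\ref{strv1}\Leftrightarrow\ref{strv2}$ via trace duality is correct and standard, and the symmetry observation that $\ref{strv3}\Leftrightarrow\ref{strv4}$ follows by the substitution $(R,\mathcal{N})\to(R^\dag,\mathcal{V})$ is fine. However, you have not supplied the bridging step between the two pairs, and your parenthetical ``or equivalently observing $\ref{strv1}\Leftrightarrow\ref{strv2}$ applied to $R^\dag$'' does not bridge anything---it just reproduces $\ref{strv3}\Leftrightarrow\ref{strv4}$ again. The implication $\ref{strv2}\Rightarrow\ref{strv4}$ (equivalently $\ref{strv2}\Rightarrow\ref{strv3}$) is precisely the nontrivial content of ``structural reversibility'': that causality of $R$ forces causality of $R^\dag$. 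It does \emph{not} follow from trace duality alone.

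The missing argument is a commutant computation. Assume $\ref{strv2}$ and fix $A_z\in\mathcal{A}_z$. For any $y$ with $z\notin\mathcal{N}_y$ (equivalently $y\notin\mathcal{V}_z$) and any $B_y\in\mathcal{A}_y$, one has
\[
[\,R A_z R^\dag,\,B_y\,]=R\,[\,A_z,\,R^\dag B_y R\,]\,R^\dag=0,
\]
since $R^\dag B_y R\in\mathcal{A}_{\mathcal{N}_y}$ by $\ref{strv2}$ and $z\notin\mathcal{N}_y$. Thus $R A_z R^\dag$ commutes with every cell algebra $\mathcal{A}_y$ for $y\notin\mathcal{V}_z$. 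Now invoke Theorem~\ref{thmsubalg} to know $R A_z R^\dag\in\mathcal{Z}_W$, hence $R A_z R^\dag\in\mathcal{A}_E$ for some finite $E$; commuting with $\mathcal{A}_y$ for each $y\in E\setminus\mathcal{V}_z$ inside the finite tensor product $\mathcal{A}_E$ forces $R A_z R^\dag\in\mathcal{A}_{E\cap\mathcal{V}_z}\subset\mathcal{A}_{\mathcal{V}_z}$, which is $\ref{strv4}$. Without this step your cycle of implications is not closed.
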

Part~\ref{strv2} of  this theorem implies that $\gamma_{[R]}$~\eqref{gammaR} is causal in the Heisenberg template sense, with the same neighborhood $\mathcal{N}$ as in the Schr\"{o}dinger template $(\mathcal{H}, [R])$. 
We can also show that $\gamma_{[R]}$   is  translation invariant in the Heisenberg template sense. This follows because $[R]$ is translation invariant, and because for all  $b \in \mathcal{Z}$, 
\begin{equation*} 
\mu_z(b) =  \tau_z b \tau_z^{-1},
 \end{equation*}
where $\mu_z$ is the translation operator in the Heisenberg template, and $\tau_z$ is the translation operator in the Schr\"{o}dinger template (Definition~\ref{qcadef}~\ref{transinvqca}). Consequently, for any representative $R$ of $[R]$,
\begin{align*}
\mu_z \gamma_{[R]} \mu_z^{-1} (b)  &=  \tau_z   R^\dag \tau_z^{-1} b \tau_z R \tau_z^{-1}  \\
&=   e^{-i\theta_z} R^\dag  b e^{i\theta_z} R \\
&=   R^\dag  b R = \gamma_{[R]}(b) \:.
\end{align*} 
So $\gamma_{[R]}$ is  translation invariant.

From the above discussion, it is clear that given a global evolution operator $[R]$ on $\mathcal{H}$, a $^*$-automorphism   $\gamma_{[R]}$ of  $\mathcal{Z}$  always exists.  This implies that  given a Schr\"{o}dinger  template, we can always obtain a Heisenberg  template (Definition~\ref{hqcadef}) with equivalent dynamics, i.e., yielding the same measurement statistics as in the Schr\"{o}dinger  template.

Given a Heisenberg template, the condition that determines if a dynamically equivalent  Schr\"{o}dinger template exists is best phrased in the  language of representations. 
The  obvious irreducible action of $\mathcal{Z}$   on $\mathbb{V}=\text{Span} (\mathcal{C}) $~\eqref{Vfb},
\begin{align} \label{piaction}
\mathcal{Z} \times \mathbb{V}  &\mapsto \mathbb{V}  \nonumber \\
\bigg(\bigotimes_{x\in \mathbb{Z}^n} A_x, \bigotimes_{x\in \mathbb{Z}^n} \ket{b_x} \bigg) &\rightarrow \bigotimes_{x\in \mathbb{Z}^n} A_x \big( \ket{b_x} \big) \:, 
\end{align}
extends to an irreducible representation on $\mathcal{H}$.  We denote this canonical representation ($\pi, {\mathcal{H}}$). This representation can be obtained as a direct limit, as shown in Guichardet~\cite{ag:shrt}.

Before we state the definition of equivalence, we need to be certain  that properties like translation invariance and causality would carry over from the Heisenberg template to the Schr\"{o}dinger template. 
\begin{prop} \label{SHthm}
Let $W$ be a finite dimensional Hilbert space, and ($\mathcal{Z}$, $\gamma$) be a   Heisenberg template, where $\mathcal{Z}$ is constructed from $\begin{rm}End\end{rm}(W)$. Suppose there exists a UFC Hilbert space $\mathcal{H}$ constructed over $W$, and a unitary operator $[R]$ on  $\mathcal{H}$ that  intertwines  the  representations  $(\pi\gamma, \mathcal{H})$ and $(\pi, \mathcal{H})$ of $\mathcal{Z}$, i.e., some representative $R$ of $[R]$ satisfies 
\begin{equation*} 
 R^\dag \pi(b) R  =  \pi  \left ( \gamma(b) \right)  \quad \forall b \in \mathcal{Z} \:.
 \end{equation*}
Then
\begin{enumerate}
\item $[R]$ is  unique.
\item $[R]$ is  translation invariant.
\item If $\gamma$ is causal relative to the neighborhood  $\mathcal{N}$, then  $[R]$ is  causal relative to the same neighborhood  $\mathcal{N}$.
\end{enumerate} 
 \end{prop}

\begin{proof}

\begin{enumerate}
\item $R$  is unique up to a phase factor  by Schur's Lemma (Lemma~\ref{srlem}, Appendix~\ref{apprt}) as the  representations $(\pi, \mathcal{H})$ and $(\pi\gamma, \mathcal{H})$ are irreducible; thus, $[R]$ is  unique.
\item Observe that for all $b \in \mathcal{Z}$ and all $z \in \mathbb{Z}^n$,
\begin{equation*} 
\pi(\mu_z(b)) =  \tau_z \pi(b) \tau_z^{-1} \:,
 \end{equation*}
which implies
\begin{align*} 
\pi(\mu_z \gamma \mu_z^{-1} (b)) &=  \tau_z \pi(\gamma \mu_z^{-1} (b)) \tau_z^{-1} \\
&=\tau_z   R^\dag \pi(\mu_z^{-1} (b)) R \tau_z^{-1} \\
&=\tau_z   R^\dag \tau_z^{-1} \pi(b) \tau_z R \tau_z^{-1} \:.
\end{align*}
Since   $\gamma$ is  translation invariant,
\begin{align*}
\pi(\gamma(b)) &=  \pi(\mu_z \gamma \mu_z^{-1} (b)) \\
&= \tau_z   R^\dag \tau_z^{-1} \pi(b) \tau_z R \tau_z^{-1} \:.
\end{align*} 
This implies that  $\tau_z R \tau_z^{-1}$ intertwines  the  representations  $(\pi, \mathcal{H})$ and  $(\pi\gamma, \mathcal{H})$. But $R$ is the unique intertwiner  up to a phase factor, so
\begin{equation*} 
\tau_z R \tau_z^{-1} =  e^{i\theta_{z}} R \:.
 \end{equation*}
for some $\theta_{z} \in \mathbb{R}$. Thus $[R]$ is  translation invariant. 
\item This follows  by the equivalence of  Theorem~\ref{strucreva} parts \ref{strv1} and~\ref{strv2}.
\end{enumerate}
\end{proof}
We are now justified in defining  the equivalence of templates as follows.
\begin{defn}
Let $W$ be a finite dimensional Hilbert space.  A Heisenberg template  ($\mathcal{Z}$, $\gamma$), where $\mathcal{Z}$ is constructed from $\text{End}(W)$, and a   Schr\"{o}dinger template    ($\mathcal{H}$,$[R]$), where $\mathcal{H}$ is  constructed from $W$,  are {\it equivalent} if  $[R]$ intertwines  the  representations  $(\pi\gamma, \mathcal{H})$ and $(\pi, \mathcal{H})$ 
of $\mathcal{Z}$, i.e., some  representative $R$   of $[R]$ satisfies
\begin{equation*} 
 R^\dag \pi(b) R  =  \pi  \left ( \gamma(b) \right)  \quad \forall b \in \mathcal{Z} \:.
 \end{equation*}
\end{defn}

The following result shows that a  Schr\"{o}dinger template exists  for  any  QCA given as a Heisenberg template  and for any UFC Hilbert space in which the  ITPA of the Heisenberg template is naturally embedded.
\begin{thm} \label{ESthm} 
Let $W$ be a finite dimensional Hilbert space. Given a Heisenberg template ($\mathcal{Z}$, $\gamma$),  where $\mathcal{Z}$ is constructed from $\begin{rm}End\end{rm}(W)$,  an equivalent   Schr\"{o}dinger template exists for any UFC Hilbert space $\mathcal{H}$ constructed from $W$.
\end{thm}
\begin{proof}
We consider the vector space $\mathbb{V}=\text{Span}(\mathcal{C})$~\eqref{Vfb}, and show that there is an automorphism of $\mathbb{V}$ that intertwines the $\mathcal{Z}$ (ITPA) representations  $(\pi\gamma, \mathbb{V})$ and $(\pi, \mathbb{V})$. Then we extend it to the UFC Hilbert space $\mathcal{H}$.

Given that $\gamma$ is an automorphism of $\mathcal{Z}$, and $\mathcal{A}_x \cap \mathcal{A}_y  = \mathbb{C}\mathbb{I}$ for $x\ne y$, implies that,
\begin{equation*} 
\gamma(\mathcal{A}_x) \cap \gamma(\mathcal{A}_y)  = \mathbb{C}\mathbb{I} \qquad \text{for } x\ne y.
\end{equation*}
Consider, a general  finite subset  $D \subset \mathbb{Z}^n$.  The image of $\mathcal{A}_D$ under $\gamma$ is $\gamma(\mathcal{A}_D)$.
Let the commutant of  $\gamma(\mathcal{A}_D)$ in  $\mathcal{Z}$ be  $\gamma(\mathcal{A}_D)^{C}$,
\begin{align*} 
\gamma(\mathcal{A}_D)^{C}&=  \{T \in \mathcal{Z} : \gamma(a) T = T  \gamma(a) \text{ for all } a \in \mathcal{A}_D\}  \nonumber \nonumber \\
&=  \cup_{k} \text{Span}\big(\Pi_{y \in D_k \setminus D} \gamma(\mathcal{A}_y)\big)
\end{align*}

$\gamma(\mathcal{A}_D)$ is local upon (i.e., non-identity on, as in Definition~\ref{localD}) a finite set of tensor factor indices by the causality  of $\gamma$, required in Definition~\ref{hqcadef}~(\ref{ithm1}). Let $K \subset \mathbb{Z}^n$ be the smallest set such that $\gamma(\mathcal{A}_D) \subset \mathcal{A}_K =  \bigotimes_{y \in K} \mathcal{A}_y$. By~\eqref{gammaid} again we are justified in writing
  \begin{equation*} \label{gammaid}
\mathcal{Z} = \text{Span}\big(\gamma(\mathcal{A}_D)  \gamma(\mathcal{A}_D)^{C}\big).
\end{equation*}
$\mathcal{A}_K \subset \mathcal{Z}$, so
  \begin{align} \label{Akdec}
 \mathcal{A}_K &=  \text{Span}\big(\gamma(\mathcal{A}_D)  \gamma(\mathcal{A}_D)^{C}\big) \cap  \mathcal{A}_K  \nonumber \\
  &= \text{Span}\bigg(\gamma(\mathcal{A}_D) \big(\gamma(\mathcal{A}_D)^{C} \cap  \mathcal{A}_K\big)  \bigg) \:,
\end{align}
where the second equality follows since  $\gamma(\mathcal{A}_D) \subset \mathcal{A}_K$.  
  It allows us to consider  $\gamma(\mathcal{A}_D)$ as a subalgebra  of $\mathcal{A}_K$ instead of  $\mathcal{Z}$. We  call the  restriction of the  representation $(\pi \gamma, \mathbb{V})$ to a subalgebra  $\mathcal{A} \subset \mathcal{Z}$, the action of $\mathcal{A}$ via $\pi \gamma$ on $\mathbb{V}$. The action of  algebra $\mathcal{A}_D=\bigotimes_{x \in D} \mathcal{A}_x$  via $\pi \gamma$ on $\mathbb{V}$, because of this, can be restricted to    $\bigotimes_{x \in K} W$, and extended trivially to the rest of  $\mathbb{V}$. $\gamma(\mathcal{A}_D) \cong \bigotimes_{x \in D} \text{End} (W) = \begin{rm}{End}\end{rm}\big( \bigotimes_{x \in D} W \big)$   is a simple, self-adjoint algebra.
Let us  denote the canonical representation of $\text{End} (W)$ on $W$ by $(\chi, W)$.  Up to equivalence, the  finite dimensional irreducible module of $\mathcal{A}_D = \bigotimes_{x \in D} \text{End} (W)$ occurring in $\bigotimes_{x \in K} W$  is the {\it outer tensor product}~\eqref{outtenrep} representation  $(\widehat{\bigotimes}_{x \in D} \chi, \bigotimes_{x \in D} W)$. 
By the Double-Commutant Theorem, Theorem~\ref{dblctthm}, there is an  $\mathcal{A}_D$ module isomorphism, 
\begin{equation*} 
S_K : U  \otimes \big( \bigotimes_{x \in D} W \big)  \longrightarrow  \bigotimes_{x \in K} W \:,
\end{equation*}
where $U = \text {Hom}_{\mathcal{A}_D}(\widehat{\bigotimes}_{x \in D} \chi, \bigotimes_{x \in K} W)$ is the multiplicity space for  $\mathcal{A}_D$ action via $\pi\gamma$ on $\bigotimes_{x \in K} W$. Also, under this isomorphism,
\begin{equation*} 
\mathcal{A}_D \cong  \mathbb{I}  \otimes \begin{rm}{End}\end{rm}\big( \bigotimes_{x \in D} W \big) \:.
\end{equation*}
This implies that the irreducible representation of $\mathcal{A}_D$ in its action via $\pi \gamma$ on $\mathbb{V}$ is  the {\it outer tensor product}~\eqref{outtenrep} representation $(\widehat{\bigotimes}_{x \in D} \chi, \bigotimes_{x \in D} W)$, which is locally irreducible.
Therefore, by the primary decomposition~\eqref{pdec}, there is an $\mathcal{A}_D$ module isomorphism, 
\begin{align} \label{constRD}
S_D: E_D \otimes \big(\bigotimes_{x \in D} W \big) &\cong \mathbb{V} \nonumber \\
\epsilon_D \otimes \bigotimes_{x \in D} w_x &\mapsto \epsilon_{D_n}\big( \bigotimes_{x \in D} w_x\big) \:,
\end{align}
where $E_D = \text {Hom}_{\mathcal{A}_D}(\widehat{\bigotimes}_{x \in D} \chi, \mathbb{V})$  is the multiplicity space for $\mathcal{A}_D$ action via $\pi\gamma$. Denote by  $\mathcal{Z}^{D}$ the commutant of  $\mathcal{A}_D$ in  $\mathcal{Z}$  for action via $\pi \gamma$ on $\mathbb{V}$, 
\begin{align} \label{ZdefgammaD}
\mathcal{Z}^{D}&=  \{T \in \mathcal{Z} : \pi \gamma(a) \pi \gamma(T)=\pi \gamma(T) \pi \gamma(a) \text{ for all } a \in \mathcal{A}_D\}  \nonumber \\
&=  \{T \in \mathcal{Z} : aT=Ta \text{ for all } a \in \mathcal{A}_D\} \nonumber \\ 
&=  \cup_{k} \bigotimes_{y \in D_k \setminus D} \mathcal{A}_y \:.
\end{align}
$T\in \mathcal{Z}^{D}$~\eqref{ZdefgammaD} acts as $\gamma(T)$ on $E_D$ by left multiplication (see the discussion in Appendix~\ref{apprt} after Proposition~\ref{proppdec}). Let us denote by  $l$ the left multiplication by an element of $\gamma(\mathcal{Z}^{D})$ on $E_D$. Since $\gamma$ is an automorphism and $\mathbb{V}$ is an irreducible $\mathcal{Z}$  module, this implies $E_D \otimes (\bigotimes_{x \in D} W )$ is an irreducible  $\mathcal{Z}^{D} \otimes \mathcal{A}_D$ module  under the outer tensor product action  $(l \circ \gamma) \widehat{\otimes} \big(\widehat{\bigotimes}_{x \in D} \chi\big) $. Thus,  $E_D$ is an irreducible $\mathcal{Z}^{D}$ module under $l \circ \gamma$. We denote by $V_{D}$ the vector space on the left side of  equation~\eqref{constRD},
\begin{equation*} 
V_{D} =  E_{D}  \otimes \big(\bigotimes_{x \in D} W \big) \:.
\end{equation*}

Let us consider the canonical representation of $\mathcal{A}_D$ on $\mathbb{V}$ via $\pi$, i.e., restriction of $\pi$ to $\mathcal{A}_D$.  Then we have an $\mathcal{A}_D$ module isomorphism
\begin{align*} 
L_D: U_D \otimes \big(\bigotimes_{x \in D} W \big) &\cong \mathbb{V} \nonumber \\
u_D \otimes \bigotimes_{x \in D} w_x &\mapsto u_D\big( \bigotimes_{x \in D} w_x\big) \:,
\end{align*}
where $U_D = \text {Hom}_{\mathcal{A}_D}(\widehat{\bigotimes}_{x \in D} \chi, \mathbb{V})$ is the multiplicity space for $\mathcal{A}_D$ action via $\pi$. We can identify the multiplicity space $U_D$ in this case with $\mathbb{V}_{\bar{D}}$~\eqref{VDdef}, the vector space complement in $\mathbb{V}$ to the tensor factors in $D$. We are implicitly identifying the indices of  the modules $\bigotimes_{x \in D} W$ with those in $\mathbb{V}$ for the  $\mathcal{A}_D$ actions  via $\pi \gamma$ and $\pi$.

Choose  any map $F \in   \text {Hom}(E_D,U_D)$. Then $L_D (\mathbb{I}\otimes F) S_D^{-1} \in \text {End}_{\mathcal{A}_D}(\mathbb{V})$, 
i.e., is an $\mathcal{A}_D$ module homomorphism of $V$ intertwining $\mathcal{A}_D$ actions via $\pi\gamma$ and $\pi$. 
 Let us denote by $\mathcal{R}_D$ the vector space of $\mathcal{A}_D$ module homomorphisms of $V$ intertwining $\mathcal{A}_D$ actions via $\pi\gamma$ and $\pi$,
\begin{equation} \label{RDDEf}
\mathcal{R}_D = \text {End}_{\mathcal{A}_D}(\mathbb{V}) \:.
\end{equation}
We have now set up the objects we need for a  finite subset $D \subset \mathbb{Z}^n$.

Next, we specialize to the sets $\{D_k\}$ in  the ascending chain~\eqref{Dkchain} and  relate the spaces and morphisms associated with them. Let $D_m \subset D_n \in \{D_k\}$. We construct a  $\mathcal{Z}$ module isomorphism between the spaces $V_{D_m}$ and $V_{D_n}$, 
\begin{align} \label{VDmn}
V_{D_m} &=    E_{D_m}  \otimes \bigotimes_{x \in D_m} W,    \nonumber \\
V_{D_n} &=   E_{D_n}  \otimes  \bigotimes_{x \in D_n} W  \:.
\end{align}
There is a canonical map,
\begin{equation*} 
J_{n,m}:    E_{D_n} \otimes   \bigotimes_{x \in D_n \setminus D_m} W  \rightarrow E_{D_m} \:,     
\end{equation*}
that we need to  describe first.  Choose an orthonormal   basis $\mathcal{B} \subset W$  containing the quiescent state $\ket{q} \in W$, as in the definition of the set of finite, unbounded  configurations~\eqref{sofc}. 
 Observe that an element  $\epsilon_{D_n} \in  E_{D_n}$ is a linear map 
\begin{equation*}
\epsilon_{D_n}:  \bigotimes_{x \in D_n} W \rightarrow \mathbb{V} \:.
 \end{equation*}
We can write  $\bigotimes_{x \in D_n} W$ in the form
\begin{equation*}
 \bigotimes_{x \in D_n} W =   \bigotimes_{x \in D_m} W \otimes \bigotimes_{y \in D_n \setminus D_m} W \:,
\end{equation*}
expressing a basis element as 
\begin{equation*}
 \bigotimes_{x \in D_n} b_x =  \bigotimes_{x \in D_m} b_x \otimes \bigotimes_{y \in D_n \setminus D_m} b_y \:,
\end{equation*}
where $b_x \in \mathcal{B}$. 
Then $\epsilon_{D_n}$  maps such an element to $\mathbb{V}$,
\begin{align*}
\epsilon_{D_n}:  \bigotimes_{x \in D_n} W &\rightarrow \mathbb{V} \\ 
 \bigotimes_{x \in D_m} b_x \otimes \bigotimes_{y \in D_n \setminus D_m} b_y &\mapsto  \epsilon_{D_n}\big(\bigotimes_{x \in D_m} b_x \otimes \bigotimes_{y \in D_n \setminus D_m} b_y\big) \:.
\end{align*}
We obtain $J_{n,m}$ by evaluating elements of $E_{D_n}$ at the components corresponding to the indices in $D_n \setminus D_m$,
\begin{align} \label{Jnm}
  J_{n,m}:   E_{D_n} \otimes   \bigotimes_{y \in D_n \setminus D_m} W  &\rightarrow E_{D_m}  \nonumber \\ 
 \epsilon_{D_n}  \otimes  \bigotimes_{y \in D_n \setminus D_m} b_y &\mapsto  \epsilon_{D_m}\big(\bigotimes_{x \in D_m} b_x\big)=\epsilon_{D_n}\big(\bigotimes_{x \in D_m} b_x \otimes \bigotimes_{y \in D_n \setminus D_m} b_y\big) \:,
\end{align}
and extending by linearity. 
We extend $J_{n,m}$ to $V_{D_n}$ by  identity on the indices in $D_m$, and denote this map by ${\hat J}_{n,m}$,
\begin{equation*} 
{\hat J}_{n,m}: E_{D_n} \otimes   \bigotimes_{x \in D_n} W  \rightarrow E_{D_m}  \otimes   \bigotimes_{x \in D_m} W  \:.
\end{equation*}
${\hat J}_{n,m}$ intertwines the irreducible actions of $\mathcal{Z}$ on $V_{D_n}$ and $V_{D_m}$~\eqref{VDmn}. By Schur's Lemma (Lemma~\ref{srlem}), it is an  isomorphism. Note that this further implies that $J_{n,m}$~\eqref{Jnm} is an isomorphism. 
For $D_l \subset D_m \subset D_n$, we have that
\begin{equation} \label{Jrel}
{\hat J}_{n,l}=  {\hat J}_{m,l} \circ  {\hat J}_{n,m} \:. 
\end{equation}
Also, ${\hat J}_{k,k} = \mathbb{I}$ for all $k$. Let
\begin{equation} \label{Trel}
{\hat T}_{m,n} =  {\hat J}^{-1}_{n,m}: V_{D_m} \rightarrow  V_{D_n} \:.
\end{equation}
$\{V_{D_m}, {\hat T}_{m,n}\}$  form  a direct system (see Appendix~\ref{dirlimback}) of isomorphic $\mathcal{Z}$ modules, i.e., for $D_l \subset D_m \subset D_n$,  ${\hat T}_{l,n}={\hat T}_{m,n} \circ {\hat T}_{l,m}$, which follows from~\eqref{Jrel}.  The direct limit of this system exists by Theorem~\ref{dirlim}.  Let $\hat{\mathbb{V}}$ be this direct limit,
\begin{equation*} 
\hat{\mathbb{V}}  = \underrightarrow{\lim} V_{D_m}.
\end{equation*}
Observe that $\mathbb{V}$ also satisfies the universal property of a direct limit  for the system $(V_{D_m}, {\hat T}_{m,n})$. To see this, without loss of generality, include the empty set in the ascending chain of sets~\eqref{Dkchain}, $D_0=\emptyset \in \{D_k\}$. Then $V_{\emptyset}= \mathbb{V}$.  Let  $f^k={\hat T}^{-1}_{0,k}={\hat J}_{k,0}$ for  $D_k$. Then $(\mathbb{V}, \{f^k\})$ is a target, since, $f^m = f^n \circ {\hat T}_{m,n}$ for $D_m \subset D_n$. 

Let $(\mathbb{W},\{g^k\})$ be another target. Let $g=g^{0}:   V_{\emptyset}= \mathbb{V} \rightarrow \mathbb{W}$. Then $g^k=g \circ f^k$. The map $g$ is unique, i.e., if there is another map $h: V_{\emptyset}= \mathbb{V} \rightarrow \mathbb{W}$ satisfying $g^k=h \circ f^k$, then $h=g$, since $f^k$ are isomorphisms. This shows that $(\mathbb{V}, \{f^k\})$ satisfies the universal property. Thus $\mathbb{V}$ is isomorphic to $\hat{\mathbb{V}}$, which follows by the  uniqueness up to unique isomorphism of the direct limit. Let us denote this isomorphism by $\hat{S}$,
\begin{equation*} 
\hat{S}: \hat{\mathbb{V}}  \rightarrow \mathbb{V},
\end{equation*}
Let us denote by ($\hat{\pi}, \hat{\mathbb{V}}$)  the representation of  $\mathcal{Z}$ on  $\hat{\mathbb{V}}$.
Note that  the $\mathcal{Z}$  isomorphisms ${\hat T}_{m,n}$~\eqref{Trel} are defined with respect  to the canonical   isomorphism~\eqref{Slambda}, implying  that the  intertwining of actions is independent of the class representative in the  direct system.  $\hat{\mathbb{V}}$ is thus the  natural representation of $\mathcal{Z}$ on $\hat{\mathbb{V}}$ compatible with the tensor product structures of $\mathcal{Z}$ and $\hat{\mathbb{V}}$, that of $\hat{\mathbb{V}}$  coming from ${\hat T}_{m,n}$~\eqref{Trel}. $\hat{S}$ intertwines the irreducible $\mathcal{Z}$ representations   ($\hat{\pi}, \widehat{\mathbb{V}}$) and  ($\pi \gamma, \mathbb{V}$).

Next, we look for the $\mathcal{Z}$ module isomorphism of $\mathbb{V}$ intertwining the $\mathcal{Z}$ representations ($\pi \gamma, \mathbb{V}$) and  ($\pi, \mathbb{V}$). Under the isomorphisms in~\eqref{Jnm}, we have that 
\begin{equation*}
\text{Hom}(E_{D_m},U_{D_m}) \cong  \text{End}(\bigotimes_{x \in D_n \setminus D_m} W) \otimes \text{Hom}(E_{D_n},U_{D_n}) \:.
\end{equation*}

The trace, $\langle A,B \rangle= \text{tr} (AB)$, for $A,B \in \text{End}(\bigotimes_{x \in D_n \setminus D_m} W)$, is a non-degenerate bilinear form. Let $P_{m,n}$ be the linear operator
\begin{align*}
P_{m,n}: \text{End}(\bigotimes_{x \in D_n \setminus D_m} W) \otimes \text{Hom}(E_{D_n},U_{D_n}) &\rightarrow \text{Hom}(E_{D_n},U_{D_n})  \nonumber \\
\alpha \otimes \beta &\mapsto  \text{tr}(\alpha) \beta \:.
\end{align*}
$P_{m,n}$ is a surjective homomorphism. By the discussion before~\eqref{RDDEf}, under the isomorphisms in~\eqref{Jnm}, $P_{m,n}$ induces a surjective homomorphism of vector spaces,
\begin{equation*}
M_{m,n}: \mathcal{R}_{D_m} = \text {End}_{\mathcal{A}_{D_m}}(\mathbb{V}) \rightarrow  \mathcal{R}_{D_n} =  \text {End}_{\mathcal{A}_{D_n}}(\mathbb{V})\:.
\end{equation*}
For $\nu,\eta \in \text{End}(W)$, $\text{tr}(\nu\otimes \eta) = \text{tr}(\nu) \text{tr}(\eta)$. This implies,  for  $D_l \subset D_m \subset D_n$,  $M_{l,n}=M_{m,n} \circ M_{l,m}$. Thus, 
$\{\mathcal{R}_{D_m}, M_{m,n}\}$  form  a direct system. Let its direct limit, which exists by Theorem~\ref{dirlim}, be
\begin{equation*} 
\hat{\mathcal{R}}  = \underrightarrow{\lim} \mathcal{R}_{D_m}\:.
\end{equation*}
\begin{remark} The vector space $\mathbb{V}$ is a direct limit~\eqref{Vaslim}, the ITPA $\mathcal{Z}$  and the canonical representation $\pi$ can also be constructed as direct limits (Guichardet~\cite{ag:shrt}). The direct limits in this proof should be viewed in that context.
\end{remark}
Choose $\hat{R} \in \hat{\mathcal{R}}$,
\begin{equation*} 
\hat{R}: \mathbb{V}  \rightarrow \mathbb{V} \: .
\end{equation*}
Then $\hat{R}$ intertwines the $\mathcal{Z}$ representations  ($\pi \gamma, \mathbb{V}$) and ($\pi, \mathbb{V}$).  By Schur's Lemma (Lemma~\ref{srlem}), $\hat{R}$ is the  unique $\mathcal{Z}$ module isomorphism up to a phase factor. Since $\gamma$ is a $*$-automorphism,  $\hat{R}$ is  unitary on $\mathbb{V}$. Further,  by the Bounded Linear Transformation (BLT) Theorem (Reed and Simon~\cite{rs:fa1}  Theorem 1.7, pg. 9), $\hat{R}$ can be uniquely and unitarily extended to  $\mathcal{H}$. 
We denote  this  extension to $\mathcal{H}$ by $R$,
\begin{equation} \label{RdefH}
R: \mathcal{H}  \rightarrow \mathcal{H},
\end{equation}

By Proposition~\ref{SHthm},  the Schr\"{o}dinger template  given by ($\mathcal{H}$,$[R]$) is equivalent to the given Heisenberg template ($\mathcal{Z}, \gamma$).
\end{proof}

\section{Template equivalence and the quantum lattice gas automaton} \label{appl}
 A QLGA is a model of  multiple particles propagating on a lattice  and scattering at the lattice points (called {\it sites} in this context) through interactions with each other at the lattice points or through self-interactions.  In~\cite{sl:wqcqlg}, the authors  studied the conditions   that classify a QCA as a QLGA. The definition of  a  QLGA   given in~\cite{sl:wqcqlg} is a Schr\"{o}dinger template  based on a particular UFC Hilbert space.  The QLGA  classifying condition, however, involves  propagating cell algebras interacting  at a cell, belonging to the Heisenberg template.  Just as in the  main theorem of~\cite{sl:wqcqlg},  Theorem $\rm{III}.16$ in that paper,  we start from the QLGA calssifying condition of a QCA, and construct the same Schr\"{o}dinger template as in~\cite{sl:wqcqlg}, albeit in a mode that reflects the  approach of this paper. Through this construction, we will see restrictions being  imposed beyond those stipulated by the classifying condition to reach the explicit form required by the QLGA definition of~\cite{sl:wqcqlg}. We then scrutinize these restrictions, now apprised of  the more general  Schr\"{o}dinger  template implied by  Theorem~\ref{ESthm}.

Let us recall the definition of a QLGA from~\cite{sl:wqcqlg}. 
\begin{defn} \label{qlgadef}
A QLGA is defined on a lattice $\mathbb{Z}^n$,  and for a  neighborhood $\mathcal{N}$, as follows:
\begin{enumerate}
\item{The cell Hilbert space is  $ W = \bigotimes_{z\in\mathcal{N}} V_z$, for some finite dimensional vector spaces $\{V_{z}\}_{z \in \mathcal{N}}$. 

\item The \textit{quiescent state} $\vert q \rangle$, which  is a simple product,
\begin{equation*} 
\vert q \rangle = \bigotimes_{z \in \mathcal{N}} \vert q_{z} \rangle{\rm,~where}~\vert q_{z} \rangle  \in V_z
\end{equation*}
}

\item A UFC Hilbert space  $\mathcal{H}$ defined  in terms of $ W = \bigotimes_{z\in\mathcal{N}} V_z$ such that the quiescent states in all cells are identical  $\ket{q_x} = \ket{q}$ for all $x \in \mathbb{Z}^n$.
\item A {\it propagation operator} relative to the neighborhood $\mathcal{N}$,    $\sigma: \mathcal{H} \mapsto \mathcal{H}$, defined on the basis of $\mathcal{H}$ to be
\begin{equation}  \label{sigdefn}
 \left . \begin{array} {cccc}
 \sigma : 
  \bigotimes_{x \in \mathbb{Z}^n} \bigotimes_{z \in \mathcal{N}} \vert \textbf{k}_x(z) \rangle  \mapsto  \bigotimes_{x \in \mathbb{Z}^n} \bigotimes_{z \in \mathcal{N}} \vert \textbf{k}_{x+z}(z) \rangle
    \end{array}
  \right .
\end{equation} 
\item A local scattering operator $F$, which is a unitary operator  on the site Hilbert space $ W = \bigotimes_{z\in\mathcal{N}} V_z$, such that $F$ fixes  $\vert q \rangle$  (an eigenvector with eigenvalue one): $F \vert q \rangle =  \vert q \rangle$.  The global  \textit{scattering operator} $ \tilde F: \mathcal{H} \mapsto \mathcal{H}$, is the application of $F$ at every cell,  defined as
\begin{equation} \label{Fdef}
\tilde F =  \bigotimes_{x \in \mathbb{Z}^n} F 
\end{equation}
\item A global evolution operator $ R$ consisting of propagation $\sigma$ followed by the scattering $\tilde F$:
\begin{equation*}
 R   =    {\tilde F} \sigma 
\end{equation*}
\end{enumerate}
 A state of the QLGA is a vector in  the UFC Hilbert space.
\end{defn}

First, let us recall the terminology of~\cite{sl:wqcqlg}, but generalized to an arbitrary Heisenberg template automorphism $\gamma$.  The patch of propagated image $\gamma(\mathcal{A}_{z})$ on $\mathcal{A}_{x}$, is
\begin{equation*} 
\mathcal{D}_{z,x} = \gamma(\mathcal{A}_{z})   \cap \mathcal{A}_{x}.
\end{equation*}
We restate a version of  Theorem $\rm{III}.10$ of~\cite{sl:wqcqlg}, as  we will  refer to it. The  proof of this verison would be very similar to that of Theorem $\rm{III}.10$ in~\cite{sl:wqcqlg}, except  in the current context we have  $\gamma$, whereas in~\cite{sl:wqcqlg} the proof was in the context of $\gamma_{[R]}$~\eqref{gammaR}.
\begin{thm} [Theorem $\rm{III}.10$ in~\cite{sl:wqcqlg}] \label{thmS}
Let ($\mathcal{Z}$, $\gamma$) be the Heisenberg template of a QCA   with   neighborhood $\mathcal{N}$. Then 
$\mathcal{A}_x = \begin{rm}{span }\end{rm}(\prod_{y \in  \mathcal{N}} \mathcal{D}_{x-y,x})$
if and only if there exists an isomorphism of vector spaces:
\begin{equation} \label{Sisom}
S :  W \longrightarrow \bigotimes_{z \in \mathcal{N}} V_{z}
\end{equation}
for some vector spaces $\{V_{z}\}_{z \in \mathcal{N}}$.  Under the isomorphism $S$, for each $y   \in \mathcal{N}$:
\begin{equation*} 
\mathcal{D}_{x-y,x} \cong  \begin{rm}{End}\end{rm}(V_y) \otimes \bigotimes_{z \in \mathcal{N}, z \neq y} \mathbb{I}_{V_z}
\end{equation*}
\end{thm}

The condition $\mathcal{A}_x = \begin{rm}{span }\end{rm}(\prod_{y \in  \mathcal{N}} \mathcal{D}_{x-y,x})$  is the local characterization of the  Schr\"{o}dinger template of a QCA as a QLGA in Theorem $\rm{III}.16$ in~\cite{sl:wqcqlg}, although, it  is manifestly a condition on $\gamma$, and hence on the Heisenberg template. 
Assuming  this condition, we work toward a Schr\"{o}dinger template in which the global evolution operator is explicit.  First, we  replace $W$ with $S(W)$, where $S$ is the isomorphism in~\eqref{Sisom} and  can be taken to be an isometric isomorphism under an appropriate inner product choice on $S(W)$. Under this substitution, and abusing notation, we can assume that $W$ has the tensor product structure implied in~\eqref{Sisom}. 
Take the quiescent state $\vert q \rangle \in W  = \bigotimes_{z \in \mathcal{N}} V_z$ to be \textit{any} product vector in $W$,
 \begin{equation*}  
\vert{q} \rangle   = \bigotimes_{z \in \mathcal{N}} \vert{q}_z \rangle{\rm,~where}~\vert{q}_z \rangle  \in V_z. 
\end{equation*}
Using $W$ and  $\ket{q}$, we  construct the UFC Hilbert space $\mathcal{H}$ as described in Section~\ref{section:qcasec}.
Next, define a propagation operator $\sigma$ as in~\eqref{sigdefn}.
\begin{equation*} 
 \left . \begin{array} {cccc}
 \sigma : 
  \bigotimes_{x \in \mathbb{Z}^n} \bigotimes_{z \in \mathcal{N}} \vert \textbf{k}_x(z) \rangle  \mapsto  \bigotimes_{x \in \mathbb{Z}^n} \bigotimes_{z \in \mathcal{N}} \vert \textbf{k}_{x+z}(z) \rangle
    \end{array}
  \right .
\end{equation*}  
Then  $\sigma \pi \gamma ({\mathcal{A}}_x)  \sigma^{-1} = \pi ({\mathcal{A}}_x) $, i.e.,  $ \sigma \pi \gamma ({\mathcal{A}}_x) \sigma^{-1}$ is an automorphic image of  $\pi ({\mathcal{A}}_x)$. But  $\pi ({\mathcal{A}}_x) = \begin{rm}{End} \end{rm} (W)$. Therefore, there is a unitary map $F$ on $W$ (Schur's Lemma applied to ${\mathcal{A}}_x$ action on $W$), such that \begin{equation} \label{Fdefn}
\sigma \pi \gamma ({\mathcal{A}}_x) \sigma^{-1}  = F^{-1} \pi ({\mathcal{A}}_x) F
\end{equation} 
for each $x \in \mathbb{Z}^n$. By translation invariance of $\gamma$, the same $F$ works for every cell.  Note that $F$ only depends on $\gamma$ and not on the choice of $\ket{q}$. We say that $F$ is \textit{associated} with $\gamma$.

We state and prove the direction of the  main theorem in~\cite{sl:wqcqlg},  Theorem $\rm{III}.16$,  in which the local condition $\mathcal{A}_x = \begin{rm}{span }\end{rm}(\prod_{y \in  \mathcal{N}} \mathcal{D}_{x-y,x})$ on the Heisenberg template of a QCA implies that its  Schr\"{o}dinger template is a QLGA (upto a global isomorphism). 
An eigenvector  of an operator on a finite dimensional tensor product vector space  will be called a  \textit{product eigenvector} if it is a simple product in the vector space.
\begin{thm}[See Theorem $\rm{III}.16$ in~\cite{sl:wqcqlg}] \label{thmStructure}
Let ($\mathcal{Z}$, $\gamma$) be the Heisenberg template of a QCA   with   neighborhood $\mathcal{N}$,  satisfying: 
 $\mathcal{A}_x = \begin{rm}{span }\end{rm}(\prod_{y \in  \mathcal{N}} \mathcal{D}_{x-y,x})$. Then it is a QLGA in the sense of Definition~\ref{qlgadef} if and only if  $F$~\eqref{Fdefn} associated  with $\gamma$ has a product eigenvector in $W$. 
\end{thm}
\begin{proof}
Suppose that  $F$ has a product eigenvector  $\ket{w} \in W$. Since $F$ is unitary,  $F  \ket{w} =  e^{i\theta}\ket{w}$ for some  $\theta \in \mathbb{R}$. We can replace $F$ with $e^{-i\theta}F$, which fixes $\ket{w}$, and is still associated with $\gamma$ in~\eqref{Fdefn}. Without loss of generality then, let us assume that $F$ fixes  $\ket{w}$. Let  $\vert q \rangle = \ket{w}$. We can now construct a cell-wise automorphism  $\tilde F$ on $\mathcal{H}$ as in~\eqref{Fdef},
\begin{equation*}
\tilde F =  \bigotimes_{x \in \mathbb{Z}^n} F,
\end{equation*}
 As ${\mathcal{A}}_x$, $x \in \mathbb{Z}^n$,  generate $\mathcal{Z}$,  
\begin{equation*} 
\sigma \pi \gamma (b) \sigma^{-1} =  {\tilde F}^{-1}  \pi (b) \tilde F
\end{equation*}
for all $b \in \mathcal{Z}$.  
Rewriting the above relation, we obtain
\begin{equation*}
 \pi \gamma (b) =  \sigma^{-1} {\tilde F}^{-1}  \pi (b) \tilde F \sigma.
\end{equation*}
Thus
\begin{equation*}
R =     {\tilde F} \sigma      
\end{equation*}
intertwines $(\pi, \mathcal{H})$ and  $(\pi\gamma, \mathcal{H})$. By Proposition~\ref{SHthm}, $[R]$ is the unique such global evolution operator. Indeed it is a QLGA, as  it is composed  of the propagation operator $\sigma$ followed by the  scattering operator  $\tilde F$. 

Conversely, if $F$ does not have a product eigenvector in $W$,  that precludes defining  a  quiescent state $\ket{q} \in W$ from which to  construct a UFC Hilbert space $\mathcal{H}$: both $\sigma$ and $\tilde F$ are required to have  $\bigotimes_{x \in \mathbb{Z}^n} \ket{q} \in \mathcal{H}$ as an eigenvector by Lemma~\ref{Rinvariant} (Appendix~\ref{backres}). Thus,  it is not a QLGA in the sense of Definition~\ref{qlgadef}.
\end{proof}

Let us understand the contents of this derivation by way of the proof of Theorem~\ref{ESthm}.  Instead of treating the general case, we take  a simple special case that illustrates the ideas in Theorem~\ref{ESthm} for the QLGA of Definition~\ref{qlgadef}. We take  a one-dimensional QLGA with  neighborhood $\{0,1\}$. Let  $D_0=\emptyset$ and  $D_k=\{0,\ldots,k-1\}$ for $k > 0$, in the ascending chain of sets~\eqref{Dkchain}. Assume that the cell Hilbert space is $W=\mathbb{C}^2\otimes\mathbb{C}^2$, with basis $\{\ket{ij}\}$, $i,j \in \{0,1\}$. Let the quiescent symbol be $\ket{q}=\ket{00}$. Assume also that there is no scattering. Then, in the Heisenberg template, the automorphism $\gamma$ is defined by

\begin{equation*}
\ldots   (\mathbb{I}\otimes \mathbb{I}) \otimes  (\mathbb{I}\otimes \mathbb{I}) \otimes \underbrace{(a \otimes b)}_{\text{$x$}} \otimes (\mathbb{I}\otimes \mathbb{I}) \otimes  (\mathbb{I}\otimes \mathbb{I})  \ldots \xmapsto{\gamma} \ldots   (\mathbb{I}\otimes \mathbb{I}) \otimes  (\mathbb{I}\otimes \mathbb{I})  \otimes \underbrace{(a \otimes \mathbb{I})}_{\text{$x$}} \otimes   \underbrace{(\mathbb{I} \otimes b)}_{\text{$x+1$}}   \otimes (\mathbb{I}\otimes \mathbb{I}) \otimes  (\mathbb{I}\otimes \mathbb{I})  \ldots.
\end{equation*}

Now consider the Schr\"{o}dinger template that  Theorem~\ref{ESthm} implies. We show some of the basis elements of $\mathcal{H}$ and their  transformation by $R$~\eqref{RdefH}.
\begin{align*}
\ldots   \ket{00}  \ket{00} \underbrace{\ket{i0}}_{\text{$x$}}  \underbrace{\ket{0j}}_{\text{$x+1$}}     \ket{00}  \ket{00} \ldots &\xmapsto{R} \ldots   \ket{00}  \ket{00} \underbrace{\ket{ij}}_{\text{$x$}}     \ket{00}  \ket{00} \ldots, \\
\ldots   \ket{00}  \ket{00} \underbrace{\ket{i0}}_{\text{$x$}}  \underbrace{\ket{kj}}_{\text{$x+1$}}   \underbrace{\ket{0l}}_{\text{$x+2$}}   \ket{00}  \ket{00} \ldots &\xmapsto{R} \ldots   \ket{00}  \ket{00} \underbrace{\ket{ij}}_{\text{$x$}} \underbrace{\ket{kl}}_{\text{$x+1$}}      \ket{00}  \ket{00} \ldots
\end{align*}
The intertwining of the representations ($\pi \gamma, \mathcal{H}$) and ($\pi, \mathcal{H}$) by $R$ is clear. We observe that $R$ behaves as the propagation operator $\sigma$ in Definition~\ref{qlgadef}, except that we interpret it as the isomorphism arising from Theorem~\ref{ESthm}. This also provides a concrete interpretation of  Theorem~\ref{ESthm} when  an explicit form of the intertwining operator $R$, the global evolution operator,  can be constructed.   

We can also consider how the general case of Theorem~\ref{ESthm} compares with the QLGA definition,  Definition~\ref{qlgadef}. The local condition in Theorem~\ref{thmS} on the Heisenberg template has the  direct consequence that the cell Hilbert space has a tensor product structure. This is a condition that the Heisenberg template requires and would carry over to the Schr\"{o}dinger  template in Theorem~\ref{ESthm}. In order to create the QLGA of Definition~\ref{qlgadef}, further restrictions were imposed: the associated local scattering operator $F$ associated with $\gamma$ has at least one  product eigenvector, and the quiescent state $\ket{q}$ is one of these eigenvectors.  Theorem~\ref{ESthm} circumvents these at the  expense of the form  of the global evolution operator.

\section{Conclusion} \label{concl}
The  discussion in this paper is focused on the relation between the Schr\"{o}dinger and Heisenberg templates of QCA, within  a borader context of representations of von Neumann algebras.  ITPA is encountered in discussions of  hyperfinite $\Rmnum{2}_1$ factor, such as in~\cite{br:oaqsm1,js:its}. In this paper  the representations of ITPA being alluded to are the $^*$-algebra representations in the specific context of QCA, which in the Schr\"{o}dinger template are defined on a separable Hilbert space, constituting a  $\Rmnum{1}_\infty$ factor. We show that each Heisenberg template has an equivalent Schr\"{o}dinger template. Moreover, an equivalent  Schr\"{o}dinger template exists for any UFC Hilbert space in whose space of  bounded linear operators  the ITPA of the Heisenberg template is  naturally embedded.

 Revisiting the  case of QLGA, a subclass of QCA studied in~\cite{sl:wqcqlg}, we found that the definition of QLGA in~\cite{sl:wqcqlg} was explicit but  restrictive in the sense of the framework and results in this paper. That is precisely because the characterizing local condition of a QLGA belongs to the Heisenberg template. A priori, it does not embody those restrictions, which are imposed through the need to conform with the QLGA definition of~\cite{sl:wqcqlg}, given as a particular Schr\"{o}dinger template.  

Theorem~\ref{ESthm},  does  not provide the  explicit form of a Schr\"{o}dinger template for a QCA  given in the  Heisenberg template.  Nevertheless, it aids in the pursuit of devising QCA  through a Heisenberg template by assuring us of the existence of the  Schr\"{o}dinger template. Our result  extends the reach of  QCA  further in the realms of  quantum computation, information, and simulation, and enhances their versatility as  models of quantum field theories  and quantum gravity.


\section*{Acknowledgments} 
The author would like to thank  Nolan Wallach, David Meyer and members of his research group,  and  Ji\v{r}\'{\i} Lebl for productive and edifying discussions.


\def\MR#1{\relax\ifhmode\unskip\spacefactor3000 \space\fi
  \href{http://www.ams.org/mathscinet-getitem?mr=#1}{MR#1}}
\begin{bibdiv}
\begin{biblist}

\bib{bib:vonneu1}{book}{
 author = {von Neumann, J.},
 editor = {Burks, A.},
 title = {Theory of Self-Reproducing Automata},
 date = {1966},
 publisher = {University of Illinois Press},
 address = {Champaign, IL, USA},
} 

\bib{f:spwc}{article}{
title = {Simulating physics with computers},
  author = {Feynman, R.},
   journal = {Int. J. Theor. Phys.},
    date = {1982},
     volume={21},
    issue={6\/7},
    pages={467\ndash 488},
    note={\href{http://dx.doi.org/10.1007/BF02650179}{doi:10.1007/BF02650179}},
        }
        
\bib{gz:qca}{article}{
      author={Grossing, G.},
      author={Zeilinger, A.},
       title={Quantum cellular automata},
        date={1988},
   journal={Complex Systems $2$},
       pages={197\ndash 208},
}

\bib{dls:dpwfqca}{proceedings}{
      author={Durr, C.},
      author={LeThanh, H.},
      author={Santha, M.},
       title={A decision procedure for well-formed linear quantum cellular automata},
        date={1996},
     booktitle={STACS 96},
     pages={281\ndash 292},
  note={\href{http://dx.doi.org/10.1007/3-540-60922-9_24}{doi:10.1007/3-540-60922-9\_24}},
}

\bib{a:watrous}{proceedings}{
    author = {Watrous, J.},
    title = {On One-Dimensional Quantum Cellular Automata},
    booktitle = {In 36th Ann. Symp.  Found. Comp. Sci.},
    date = {1995},
    pages = {528\ndash 537},
    publisher = {Society Press}
}

\bib{bib:meyer2}{article}{
      author={Meyer, D.},
       title={From quantum cellular automata to quantum lattice gases},
        date={1996},
     journal={J. Stat. Phys.},
      volume={85},
       pages={551\ndash 574},
         note={\href{http://dx.doi.org/10.1007/BF02199356}{doi:10.1007/BF02199356}},

}

\bib{dm:u1dqca}{article}{
      author={Meyer, D.},
       title={Unitarity in one dimensional nonlinear quantum cellular
  automata},
        date={1996},
     journal={preprint},
       note={\href{https://arxiv.org/abs/quant-ph/9605023}{arXiv:quant-ph/9605023}},

}

\bib{pd:pqm}{book}{
  title = {The  Principles of  Quantum  Mechanics},
  author = {Dirac, P.},
  publisher = {Clarendon  Press},
  address = {Oxford},
  date = {1930}
}

\bib{vn:mgdq}{book}{
  title = {Mathematische Grundlagen der Quantenmechanik},
  author = {von Neumann, J.},
  publisher = {J. Springer},
  address = {Berlin},
  date = {1932}
}

\bib{g:mcshs}{article}{
  author = {Gleason, A.},
  title = {Measures on the closed subspaces of a Hilbert space},
  journal = {Journal of Mathematics and Mechanics},
  volume = {6},
  date = {1957},
  number = {6},
  pages = {885\ndash 893},
  url = {http://www.jstor.org/stable/24900629},
}

 \bib{tv:hcwrop}{book}{
      author={von Neumann, J.},
      author={Taub, A.},
       title={Collected Works, Volume III: Rings of Operators},
    publisher={Pergamon Press},
     address={New York},
       date={1961},
}

\bib{GelNeu43}{article}{
author = {Gelfand, I.},
author = {Neumark, M.},
title = {On the imbedding of normed rings into the ring of operators in Hilbert space},
journal = {Rec. Math. [Mat. Sbornik] N.S.},
 date = {1943},
  volume = {12(54)},
  number = {2},
  pages = {197\ndash 217},
}

\bib{segal1947}{article}{
author = {Segal, I.},
fjournal = {Bulletin of the American Mathematical Society},
journal = {Bull. Amer. Math. Soc.},
month = {02},
number = {2},
pages = {73\ndash 88},
publisher = {American Mathematical Society},
title = {Irreducible representations of operator algebras},
volume = {53},
date = {1947},
}

\bib{sw:rvqca}{article}{
      author={Schumacher, B.},
      author={Werner, R.},
       title={Reversible quantum cellular automata},
        date={2004},
     journal={preprint},        
  note={\href{https://arxiv.org/abs/quant-ph/0405174}{arXiv:quant-ph/0405174}},
}

\bib{br:oaqsm1}{book}{
      author={Bratelli, O.},
      author={Robinson, D.},
       title={Operator Algebras and Quantum Statistical Mechanics $1$, $2$nd ed.},
   publisher={Springer},
     address={New York},
        date={2002},
        ISBN={3-540-17093-6},
}

\bib{ag:shrt}{book}{
      author={Guichardet, A.},
       title={Symmetric Hilbert Spaces and Related Topics: Infinitely Divisible Positive Definite Functions. Continuous Products and Tensor Products. Gaussian and Poissonian Stochastic Processes},
      series={Lecture Notes in Mathematics},
    publisher={Springer},
     address={New York},
     Volume={261},
}

\bib{js:its}{book}{
      author={Jones, V.},
      author={Sunder, V.S.},
       title={Introduction to Subfactors},
      series={London Mathematical Society},
   publisher={Cambridge University Press},
       address={Cambridge, United Kingdom},
      date={1997},
      Volume={234},
        ISBN={521-58420-5},
}

\bib{vn:idp}{article}{
      author={von Neumann, J.},
       title={On infinite direct products},
        date={1939},
     journal={Composito Mathematica},
      Volume={6},
       pages={1\ndash 77},
}

\bib{gnvw:itodqwca}{article}{
      author={Gross, D.},
      author={Nesme, V.},
      author={Vogts, H.},
      author={Werner, R.},
       title={Index Theory of One Dimensional Quantum Walks and Cellular Automata},
        date={2012},
     journal={Comm. Math. Phys.},
      volume={310},
       pages={419\ndash 454},
  note={\href{http://dx.doi.org/10.1007/s00220-012-1423-1}{doi:10.1007/s00220-012-1423-1}},
      }

\bib{anw:odqca}{proceedings}{
      author={Arrighi, P.},
      author={Nesme, N.},
      author={Werner, R.},
       title={One-Dimensional Quantum Cellular Automata over Finite, Unbounded
  Configurations},
        date={2008},
     journal={LATA},
             pages={64\ndash 75},
  note={\href{http://dx.doi.org/10.1007/978-3-540-88282-4_8}{doi:10.1007/978-3-540-88282-4\_8}},
}

\bib{anw:ucil}{article}{
      author={Arrighi, P.},
      author={Nesme, N.},
      author={Werner, R.},
       title={Unitarity plus causality implies localizability.},
        date={2011},
     journal={J. Comput. Syst. Sci.},
      volume={77},
       pages={372\ndash 378},
  note={\href{http://dx.doi.org/10.1016/j.jcss.2010.05.004}{doi:10.1016/j.jcss.2010.05.004}},
}

\bib{adma:deqwhctl}{article}{
  title = {Dirac equation as a quantum walk over the honeycomb and triangular lattices},
  author = {Arrighi, P.},
  author = {Di Molfetta, G.},
  author = {M\'arquez-Mart\'{\i}n, I.},
  author = {P\'erez, A.},
  journal = {Phys. Rev. A},
  volume = {97},
  issue = {6},
  pages = {062111},
  numpages = {5},
  date = {2018},
  month = {Jun},
  publisher = {American Physical Society},
  url = {https://link.aps.org/doi/10.1103/PhysRevA.97.062111},
  note={\href{http://dx.doi.org/10.1103/PhysRevA.97.062111}{doi:10.1103/PhysRevA.97.062111}},
 }

\bib{bt:qlgmpsed}{article}{
  title = {Quantum lattice-gas model for the many-particle Schr\"odinger equation in $d$ dimensions},
  author = {Boghosian, B.},
  author = {Taylor, W.},
  journal = {Phys. Rev. E},
  volume = {57},
  issue = {1},
  pages = {54 \ndash 66},
  numpages = {0},
  date = {1998},
  month = {Jan},
  publisher = {American Physical Society},
    note={\href{http://dx.doi.org/10.1103/PhysRevE.57.54}{doi:10.1103/PhysRevE.57.54}},

}

\bib{agf:qwcst}{article}{
  title = {Quantum walking in curved spacetime: discrete metric},
  author = {Arrighi, P.},
  author = {Di Molfetta, G.},
  author = {Facchini, S.},
  journal = {\{Quantum\}},
  issn = {2521-327X},
  publisher = {{Verein zur F{\"{o}}rderung des Open Access Publizierens in den Quantenwissenschaften}},
  volume = {2},
  pages = {84},
  date = {2018},
  note={\href{http://dx.doi.org/10.22331/q-2018-08-22-84}{doi:10.22331/q\-2018\-08\-22\-84}},

}

\bib{mbd:qwmdfcst}{article}{
  title = {Quantum walks as massless Dirac fermions in curved space-time},
  author = {Di Molfetta, G.} 
  author = {Brachet, M.},
  author = {Debbasch, F.},
  journal = {Phys. Rev. A},
  volume = {88},
  issue = {4},
  pages = {042301},
  numpages = {5},
  date = {2013},
  month = {Oct},
  publisher = {American Physical Society},
  url = {http://link.aps.org/doi/10.1103/PhysRevA.88.042301},
  note={\href{http://dx.doi.org/10.1103/PhysRevA.88.042301}{doi:10.1103/PhysRevA.88.042301}},
}

\bib{bat:qfqcad}{article}{
title = {Quantum field as a quantum cellular automaton: The Dirac free evolution in one dimension},
journal = {Ann. Phys.},
volume = {354},
pages = {244\ndash 264},
date = {2015},
issn = {0003-4916},
  note={\href{http://dx.doi.org/10.1016/j.aop.2014.12.016}{doi:10.1016/j.aop.2014.12.016}},
url = {http://www.sciencedirect.com/science/article/pii/S0003491614003546},
author = {Bisio, A.},
author = {D'Ariano, G.}, 
author = {Tosini, A.},
keywords = {Quantum cellular automaton, Quantum walk, Dirac equation},
abstract = {We present a quantum cellular automaton model in one space-dimension which has the Dirac equation as emergent. This model, a discrete-time and causal unitary evolution of a lattice of quantum systems, is derived from the assumptions of homogeneity, parity and time-reversal invariance. The comparison between the automaton and the Dirac evolutions is rigorously set as a discrimination problem between unitary channels. We derive an exact lower bound for the probability of error in the discrimination as an explicit function of the mass, the number and the momentum of the particles, and the duration of the evolution. Computing this bound with experimentally achievable values, we see that in that regime the QCA model cannot be discriminated from the usual Dirac evolution. Finally, we show that the evolution of one-particle states with narrow-band in momentum can be efficiently simulated by a dispersive differential equation for any regime. This analysis allows for a comparison with the dynamics of wave-packets as it is described by the usual Dirac equation. This paper is a first step in exploring the idea that quantum field theory could be grounded on a more fundamental quantum cellular automaton model and that physical dynamics could emerge from quantum information processing. In this framework, the discretization is a central ingredient and not only a tool for performing non-perturbative calculation as in lattice gauge theory. The automaton model, endowed with a precise notion of local observables and a full probabilistic interpretation, could lead to a coherent unification of a hypothetical discrete Planck scale with the usual Fermi scale of high-energy physics.}
}

\bib{bapt:thqca}{article}{
  title = {Thirring quantum cellular automaton},
  author = {Bisio, A.},
  author = {D'Ariano, G.},
  author = {Perinotti, P.},
  author = {Tosini, A.},
  journal = {Phys. Rev. A},
  volume = {97},
  issue = {3},
  pages = {032132},
  numpages = {8},
  date = {2018},
  month = {Mar},
  publisher = {American Physical Society},
  note={\href{http://dx.doi.org/10.1103/PhysRevA.97.032132}{doi:10.1103/PhysRevA.97.032132}},
}

\bib{am:qcgd}{article}{
  title = {Quantum causal graph dynamics},
  author = {Arrighi, P.},
  author = {Martiel, S.},
  journal = {Phys. Rev. D},
  Volume = {96},
  number = {2},
  pages = {024026},
  numpages = {8},
  date = {2017},
  month = {Jul},
  publisher = {American Physical Society},
  note={\href{http://dx.doi.org/10.1103/PhysRevD.96.024026}{doi:10.1103/PhysRevD.96.024026}},
}

\bib{bib:msqcawp}{article}{
  title = {Quantum cellular automata without particles},
  author = {Meyer, D.},
  author = {Shakeel, A.},
  journal = {Phys. Rev. A},
  volume = {93},
  issue = {1},
  pages = {012333},
  numpages = {7},
  date = {2016},
  month = {Jan},
  publisher = {American Physical Society},
  note= {\href{http://dx.doi.org/10.1103/PhysRevA.93.012333}{doi:10.1103/PhysRevA.93.012333}}
}

\bib{svw:oscqca}{article}{
      author={Schlingemann, D.},
      author={Vogts, H.},
      author={Werner, R.},
       title={On the structure of Clifford quantum cellular automata},
        date={2008},
     journal={J.\ Math.\ Phys.},
     pages = {112104},
      volume={49},
        note={\href{http://dx.doi.org/10.1063/1.3005565}{doi:10.1063/1.3005565}},
      }

\bib{guwz:aegcqca}{article}{
author = {G\"utschow,J.},
author = {Uphoff,S.},
author = {Werner,R.},
author = {Zimbor\'as,Z.},
title = {Time asymptotics and entanglement generation of Clifford quantum cellular automata},
journal = {J.\ Math.\ Phys.},
volume = {51},
number = {1},
pages = {015203},
date = {2010},
  note={\href{http://dx.doi.org/10.1063/1.3278513}{doi:10.1063/1.3278513}},
}

\bib{pa:aoqca}{article}{
      author={Arrighi, P.},
       title={An overview of Quantum Cellular Automata},
        date={2019},
        journal={preprint},
  note={\href{http://arxiv.org/abs/1904.12956}{arXiv:quant-ph/1904.12956}},
}
 
\bib{bib:JLP}{article}{
	author = {Jordan, S.},
	author = {Lee, K.},
	author = {Preskill, J.},
	title = {Quantum Algorithms for Quantum Field Theories},
	volume = {336},
	number = {6085},
	pages = {1130\ndash 1133},
	date = {2012},
	publisher = {American Association for the Advancement of Science},
	issn = {0036-8075},
	journal = {Science}
          note= {\href{http://dx.doi.org/10.1126/science.1217069}{doi:10.1126/science.1217069}},

}

\bib{rs:fa1}{book}{
author={Reed, M.},
author={Simon, B.},
title={Methods of Modern Mathematical Physics 1. Functional
  Analysis},
date={ 1980},
publisher={Academic Press Inc., San Diego, California, USA}
}

%

%

\bib{sl:wqcqlg}{article}{
      author={Shakeel, A.},
      author={Love, P.},
       title={When is a quantum cellular automaton (QCA) a quantum lattice gas automaton (QLGA)?},
  journal={J.\ Math.\ Phys.},
  Volume={54},
   number={9},
   pages= {092203},
   note= {\href{http://dx.doi.org/10.1063/1.4821640}{doi:10.1063/1.4821640}},
       date={2013},
}

\bib{wall:sri}{book}{
      author={Goodman, R.},
      author={Wallach, N.},
       title={Symmetry, Representations and Invariants},
      series={Graduate Texts in Mathematics},
    publisher={Springer},
     address={New York},
       date={2010},
      Volume={255},
        ISBN={978-0-387-79851-6},
}

\bib{wall:ricg}{book}{
      author={Goodman, R.},
      author={Wallach, N.},
      title={Representations and Invariants of the Classical Groups},
      series={Encyclopedia of mathematics and its applications},
   publisher={Cambridge University Press},
     address={Cambridge,United Kingdom},
        date={1999},
      volume={68},
        ISBN={0-521-58273-3},
}

\bib{cp:lg}{book}{
      author={Procesi, C.},
       title={Lie Groups: An Approach through Invariants and Representations},
      series={Universitext},
    publisher={Springer-Verlag},
     address={New York},
       date={2007},
        ISBN={978-0-387-28929-8},
}

\bib{rot:hom}{book}{
      author={Rotman, J.},
       title={An Introduction to Homological Algebra},
      series={Universitext},
    publisher={Springer-Verlag},
     address={New York},
       date={2009},
        ISBN={978-0-387-24527-0},

}

\end{biblist}
\end{bibdiv}

\appendix
\section{Relevant Representation Theory} \label{apprt}

The material on representation theory is adapted from  Goodman and Wallach~\cite{wall:sri}. 

The following result is  Scholium $3.3.2$ in Goodman and Wallach~\cite{wall:ricg} pg. $135$.
\begin{scho}\label{scholium}
Let $\phi \in \begin{rm}{Aut(End}(V))\end{rm}$, where $V$ is a finite dimensional vector space.  Then there exists $G\in \begin{rm}{GL}(V)\end{rm}$ such that $\phi(x)=gxg^{-1}$ for all $x\in  \begin{rm}{End}(V)\end{rm}$.
\end{scho}

Let  $(\gamma, V)$, $(\mu, W)$ be two  representations of an associative algebra  $\mathcal{A}$.  Let $\begin{rm}{Hom}\end{rm}(V,W)$ be the space of $\mathbb{C}$-linear maps from $V$ to $W$. Denote by  $\begin{rm}{Hom}_{\mathcal{A}}\end{rm}(V,W)$ the set of all $T \in \begin{rm}{Hom}\end{rm}(V,W)$ such that $T \gamma(a) = \mu(a)T$ for all $a \in \mathcal{A}$. Such a map is called an \textit{intertwining operator} between  the two representations. Two representations $(\gamma, V)$, $(\mu, W)$ are \textit{equivalent} if there exists an invertible intertwining operator between the two representations.

\begin{lemma}[Schur's Lemma, Lemma $4.1.4$, pg. 180,  Goodman and Wallach~\cite{wall:sri}]  \label{srlem}
Let $(\gamma,V)$ and $(\mu, W)$ be irreducible representations of an associative algebra $\mathcal{A}$. Assume that $V$ and $W$ have at most countable dimensions over $\mathbb{C}$. Then 
\begin{equation*}
\begin{rm}{dim  }\:\end{rm}  \begin{rm}{Hom}_{\mathcal{A}}\end{rm}(V,W) = \left\{
\begin{array}{lr} 
1, & \text{if } (\gamma,V) \cong (\mu, W)\\
0, & \text{otherwise}
\end{array}
\right .
\end{equation*}
\end{lemma}

 Let $V$ be a  vector space. The dimension of $V$ is not assumed to be finite. Let $\mathcal{A}$ be a  finite dimensional semisimple associative subalgebra, and let $\widehat{\mathcal{A}}$ be the set of all equivalence classes of finite dimensional irreducible representations of $\mathcal{A}$.  
 Let $\widehat{\mathcal{A}}$ be the set of all equivalence classes of finite dimensional irreducible representations of $\mathcal{A}$. For each $\lambda \in \widehat{\mathcal{A}}$, fix a module  $(\pi^\lambda, F^\lambda)$.  

Define the map:
\begin{align} \label{Slambda}
s_\lambda : \begin{rm}{Hom}_{\mathcal{A}}\end{rm}(F^\lambda,V) \otimes F^\lambda &\longrightarrow V \\
 u \otimes  v &\mapsto  u( v) \nonumber
\end{align}
Then $s_\lambda$ is an intertwining operator with $ \begin{rm}{Hom}_{\mathcal{A}}\end{rm}(F^\lambda,V) \otimes F^\lambda$ an $\mathcal{A}$-module under the action $a.(u \otimes v) = u \otimes (av)$ for $a \in \mathcal{A}$. 

For $\lambda \in \widehat{\mathcal{A}}$, define the $\lambda$-\textit{isotypic component}:
\begin{align*} 
V_{(\lambda)} :\sum_{W \subset V :  W\sim \lambda }W 
\end{align*}

\begin{defn} \label{localred}
The $A$-module $V$ is locally completely reducible if the cyclic $A$-
submodule $Av$ is finite dimensional and completely reducible for every $v \in V$ .
\end{defn}

\begin{prop} \label{proppdec}
Let $V$ be a locally completely reducible $A$-module. Then the
map $s_\lambda$ gives an $A$-module isomorphism 
\begin{align*} 
\begin{rm}{Hom}_{\mathcal{A}}\end{rm}(F^\lambda,V) \otimes F^\lambda  &\cong V_{(\lambda)}
 \end{align*}
Furthermore,
\begin{align*} 
V = \bigoplus_{\lambda \in \widehat{\mathcal{A}}} V_{(\lambda)} \text{   (Algebraic direct sum)}
\end{align*}
\end{prop}
This is called the \textit{primary decomposition} of $V$.

Let $\mathcal{R} \subset \text{End}(V)$ be  a  subalgebra that acts irreducibly on $V$. Let $\mathcal{A} \subset \mathcal{R}$ be a subalgebra,  acting locally completely reducibly on $V$. Let $\begin{rm}{Spec}\end{rm}(\mathcal{A})$ be the set of irreducible representations of $\mathcal{A}$ that occur in the primary decomposition of $V$. For each $\lambda \in \begin{rm}{Spec}\end{rm}(\mathcal{A})$, fix a module  $(\pi^\lambda, F^\lambda)$. 

Let $\mathcal{R}^\mathcal{A} =\{T \in \mathcal{R} : aT=Ta \text{ for all } a \in \mathcal{A}\}$.

Let
$E^\lambda = \begin{rm}{Hom}_{\mathcal{A}}\end{rm}(F^\lambda,V)$ for $\lambda \in \begin{rm}{Spec}\end{rm}(\mathcal{A})$.
Then $E^\lambda$ is a module for $\mathcal{R}^\mathcal{A}$ in a natural way by left multiplication, since
\begin{equation*}
Tu(\pi^\lambda (a)v) = T (a (u(v))) = a(Tu(v))
\end{equation*}
for $T \in \mathcal{R}^\mathcal{A}$ , $u \in E^\lambda$ , $a \in \mathcal{A}$ , and $v \in F^\lambda$ . As $\mathcal{R}^\mathcal{A}$ commutes with $\mathcal{A}$, there is a representation of $\mathcal{R}^\mathcal{A}\otimes \mathcal{A}$ on $V$. As a module for the algebra $\mathcal{R}^\mathcal{A}\otimes\mathcal{A}$
the space $V$ decomposes, by Proposition~\ref{proppdec},  as
\begin{equation} \label{pdec}
V \cong \bigoplus_{\lambda \in \begin{rm}{Spec}\end{rm}(\mathcal{A})}
E^\lambda \otimes F^\lambda
\end{equation}
In Propostion~\ref{proppdec} an operator $T \in \mathcal{R}^\mathcal{A}$ acts by $T \otimes I$ on the summand of type $\lambda$. $E^\lambda$ is called the {\it multiplicity space} of $\lambda$.

We state  the semisimple algebra version of the Double Commutant Theorem, in Goodman and Wallach~\cite{wall:ricg} pg. $137$.  Let $V$ be a finite dimensional vector space.

\begin{thm}[Double Commutant Theorem for Semisimple Algebras]\label{dblctthm}
 
Suppose  $\mathcal{A} \subset \begin{rm}{End}(V)\end{rm}$   is a semisimple subalgebra  containing the identity operator. 
Then the algebra  $\mathcal{B} =  \begin{rm}{Comm}\end{rm}(\mathcal{A})$  is semisimple and $\mathcal{A} =  \begin{rm}{Comm}\end{rm}(\mathcal{B})$. Furthermore, there exists an $\mathcal{A}$-module isomorphism:
\begin{align*}  
S_\mathcal{A} : \bigoplus^r_{j=1}U_j  \otimes V_j  & \longrightarrow  V  \\
 \sum^r_{j=1}  u_j \otimes  v_j &\mapsto  \sum^r_{j=1}  u_j( v_j) \nonumber
\end{align*}
where $V_j$ is an irreducible $\mathcal{A}$-module, and $U_j =  \begin{rm}{Hom}_{\mathcal{A}}\end{rm}(V_j,V)$.  Under this isomorphism:
\begin{equation*}
\mathcal{A} \cong  \bigoplus^r_{j=1} \mathbb{I}_{U_j}  \otimes \begin{rm}{End}\end{rm}(V_j)
\end{equation*}
and:
\begin{equation*}
\mathcal{B} \cong  \bigoplus^r_{j=1}   \begin{rm}{End}\end{rm}(U_j)  \otimes \mathbb{I}_{V_j}
\end{equation*}
\end{thm}

\subsection*{Related Results from  Representations of Self-Adjoint Algebras} 
 We state a few  results on representations of self-adjoint algebras. Let $V$ be a finite dimensional Hilbert space, with an inner product $\langle \cdot |  \cdot \rangle$, and   $\mathcal{A} \subset \begin{rm}{End}\end{rm}(V)$ a subalgebra. Let the adjoint of an  algebra $\mathcal{A} \subset \begin{rm}{End}\end{rm}(V)$ be $\mathcal{A}^\dag = \{A^\dag : A \in \mathcal{A}\}$. An algebra $\mathcal{A}$  is self-adjoint if $\mathcal{A} = \mathcal{A}^\dag$. From the results in this section, Lemma~\ref{lemmaSA}, and Proposition~\ref{propSA} appear in~\cite{cp:lg} (pg. 145).
\begin{lemma} \label{lemmaSA}
 Let $\mathcal{A} \subset \begin{rm}{End}\end{rm}(V)$ be a self-adjoint subalgebra. If $W \in V$ is an $\mathcal{A}$-invariant subspace, then $W^\perp = \{ {v} \in V : \langle v | w \rangle = 0 \: \forall \:  {w} \in W\}$ is $\mathcal{A}$-invariant. 
\end{lemma}

\begin{proof}
Let $ {w} \in W$, $ {v} \in  W^\perp$, $A \in \mathcal{A}$. Then $A^\dag \in \mathcal{A}$, which implies $A^\dag  {w} \in W$ $\implies$   $\langle A  {v} | w \rangle  = \langle  v | A^\dag w \rangle = 0$ $\implies$  $A  {v} \in W^\perp$.
\end{proof}

\begin{prop} \label{propSA}
 Let $\mathcal{A} \subset \begin{rm}{End}\end{rm}(V)$ be a self-adjoint subalgebra. Then $V$ is an orthogonal direct sum of irreducible  $\mathcal{A}$-modules. In particular, $V$ is a completely reducible $\mathcal{A}$-module.
\end{prop}

\begin{proof}
Let   $W \subset V$ be an  $\mathcal{A}$-invariant subspace of minimal dimension.  Then it is by definition irreducible. Since $ \mathcal{A} =  \mathcal{A}^\dag$, by Lemma~\ref{lemmaSA},  $V = W \oplus W^\perp$ is an orthogonal direct sum  of $\mathcal{A}$-modules. The conclusion follows by  induction on dimension.
\end{proof}

\begin{cor} \label{sass}
 Let $\mathcal{A} \subset \begin{rm}{End}\end{rm}(V)$ be a self-adjoint subalgebra  containing the identity operator. Then $\mathcal{A}$ is semisimple.
\end{cor}
\begin{proof}
By Proposition~\ref{propSA}, $V$ is a completely reducible $\mathcal{A}$-module. Then Proposition~\ref{proppdec} implies that $\mathcal{A}$ is semisimple.
\end{proof}

\begin{cor}  \label{corSSB}
Let $\mathcal{A} \subset \begin{rm}{End}\end{rm}(V)$ be a self-adjoint subalgebra.  Then there exist for every $\lambda  \in  \widehat{\mathcal{A}}$,  a set of intertwining  operators $\{u^\lambda_j\} \subset  U^\lambda$, such that the $\lambda$-isotypic component can be written as an orthogonal direct sum: $V^{[\lambda]} =  \bigoplus_{j} u^\lambda_j(V^\lambda)$. 
\end{cor}

 \begin{proof}
By Proposition~\ref{propSA}, $V$ is a completely reducible $\mathcal{A}$-module. Thus by Proposition~\eqref{proppdec}, $s_\lambda$ in~\eqref{Slambda} is an  $\mathcal{A}$-module  isomorphism:
\begin{align*} 
s_\lambda: U^{\lambda} \otimes V^\lambda  \longrightarrow  V^{[\lambda]}
\end{align*}
Again by  Proposition~\ref{propSA}, the $\lambda$-isotypic component,
\begin{align*} 
V^{[\lambda]} = \bigoplus_{W_j \subset V :  W_j\sim \lambda }W_j 
\end{align*}
is an orthogonal direct sum.  Choose $u_j$ such that $u_j(V^\lambda) = W_j$.
\end{proof}

\subsection*{Outer Tensor  Product Representation}

Let $\mathcal{A},\mathcal{B}$ be associative algebras. Let $(\sigma, V)$ and $(\tau, W)$ be finite dimensional representations of $\mathcal{A}$, $\mathcal{B}$ respectively. The {\it outer tensor product}  representation $(\sigma \widehat{\otimes} \tau, V \otimes W)$ of the tensor product $\mathcal{A} \otimes \mathcal{B}$ is defined as follows. On a basis element $a \otimes b \in \mathcal{A} \otimes \mathcal{B}$:
\begin{equation} \label{outtenrep}
\big(\sigma \widehat{\otimes} \tau \big) (a \otimes b) = \sigma(a) \otimes \tau(b)
\end{equation}
Extend by linearity to a representation of $\mathcal{A} \otimes \mathcal{B}$. If $(\sigma, V)$ and $(\tau, W)$ are irreducible representations, then $(\sigma \widehat{\otimes} \tau, V \otimes W)$ is an irreducible representation.

\section{Relevant Direct Limit Background} \label{dirlimback}
The reader is referred to Rotman~\cite{rot:hom} for background material on direct limits. Let $(I,\le)$ be a {\it directed}  set, i.e., a set with a preorder $\le$ (partial order is a special case), with the following property: for any pair of elements  $a,b \in I$, there exists an element $c \in I$  such that  $a \le c$ and $b \le c$.

Let $\mathcal{C}$ be a category. Let  a family of objects in $\mathcal{C}$ be indexed by $I$: $\{C_i\}$, $i \in I$. Let $\{f^i_j\}$, $i,j \in I$, $i\le j$,  be a {\it directed family  of morphisms}, by which we mean  $f^i_j \in \text{Mor}(C_i, C_j)$, such that $f^i_i = \text{id}$, and $f^j_k \circ f^i_j = f^i_k$ for $i\le j\le k$. Then the pair $\big((C_i),(f^i_j)\big)$, abbreviated $\{C_i,f^i_j\}$ is called a {\it direct system} (over $I$).

Consider the following category of  pairs, each such pair being a {\it target}: $(C,\{f^i\}_{i\in I})$, where $C\in \mathcal{C}$ and each $f^i$ is an {\it insertion} morphism $f^i: C_i \rightarrow C$, such that $f^i=f^j \circ f^i_j$ for $i\le j$. 

A target $(A,\{h^i\}_{i\in I})$ is a direct limit, denoted 
$\underrightarrow{A_i}$, 
if it satisfies the following {\it universal property}: for every target  $(G,\{g^i\}_{i\in I})$  there exists a unique morphism $\phi: A \rightarrow G$, such that $g^i=\phi \circ h^i$.  Direct limit may not exist, but  if it does,  it is unique up to  unique isomorphism, i.e., if $A$ is a direct limit, then for any other direct limit $A'$, there exists a unique isomorphism $A' \rightarrow A$ that commutes with the insertion morphisms.

Let $R$ be a ring.
\begin{thm} [Proposition $5.23$, Rotman~\cite{rot:hom}] \label{dirlim} 
The direct limit of any direct system $(M_i , f^i_j)$ of left $R$-
modules over a partially ordered index set $I$ exists.
\end{thm}

\section{Relevant QCA Related Results} \label{backres}

\subsection*{Hilbert space of finite, unbounded configurations as a direct limit}
For any subset $D \subset \mathbb{Z}^n$, let 
\begin{equation*}
W_{D} = \bigotimes_{x \in D} W \:.
\end{equation*}
Consider the ascending chain of subsets~\eqref{Dkchain}.  For $D_m \subset D_n$, define a linear, isometric map
\begin{equation*}
f_{m,n}: W_{D_m} \rightarrow W_{D_n}
\end{equation*}
by first expressing
\begin{equation*}
W_{D_n} = \bigotimes_{x \in D_m} W  \otimes \bigotimes_{y \in D_n \setminus D_m} W   \:.
\end{equation*}
and then defining $f_{m,n}$ as
\begin{equation*}
f_{m,n}(\ket{v}) =  \ket{v} \otimes \bigotimes_{y \in D_n \setminus D_m} \ket{q},\quad \forall \ket{v} \in W_{D_m}\:.
\end{equation*}
For $D_l \subset D_m \subset D_n$,  $f_{n,l} = f_{m,n} \circ f_{l,m}$. Thus $\{W_{D_m}, f_{m,n} \}$ form  a direct system (see Appendix~\ref{dirlimback}). Then the direct limit of this system  $\underrightarrow{\lim} W_{D_m}$ exists by Theorem~\ref{dirlim},  with a pre-Hilbert structure induced from the inner product on $W$.
Let the  canonical insertion map for  $W_{D_m}$  be $h_{m}: W_{D_m} \rightarrow \underrightarrow{\lim} W_{D_m}$. Under  $h_{m}$, $W_{D_m}$ can be considered a subspace of $\underrightarrow{\lim} W_{D_m}$, so $\underrightarrow{\lim} W_{D_m}$ is the  union of $\{ W_{D_m}\}$.  Thus we see that 
\begin{equation} \label{Vaslim}
\mathbb{V} = \underrightarrow{\lim} W_{D_m} \:.
\end{equation}
 The {\it Hilbert space of finite, unbounded configurations} (UFC Hilbert space), denoted by $\mathcal{H}$, is the completion of $\mathbb{V}$ under the inner product norm (the  pre-Hilbert structure)  on ${\mathbb{V}}$ induced  from the inner product on $W$.

\begin{remark}
For the direct system above, we do not need to use an ascending chain of sets~\eqref{Dkchain} as the directed set on which the system is defined. We can simply use set inclusion as a partial order on finite subsets of $\mathbb{Z}^n$ to form the direct system. This is true for  all the direct systems and limits in this paper.
\end{remark}

\subsection*{$\bigotimes_{x \in \mathbb{Z}^n} \ket{q}$ as an eigenvector of translation-invariant unitary operator}
 
 \begin{lemma}[Lemma  $\rm{III}.5$ in~\cite{sl:wqcqlg}] \label{Rinvariant}
An invertible  and translation invariant operator $M$ on a Hilbert space of finite configurations $\mathcal{H}$ has $\bigotimes_{x \in \mathbb{Z}^n} \ket{q}$ as an eigenvector:
\begin{equation*}
M\big(\bigotimes_{x \in \mathbb{Z}^n} \ket{q} \big)= \lambda \bigotimes_{x \in \mathbb{Z}^n} \ket{q}  
\end{equation*}
for some $\lambda \in \mathbb{C}^\times$. In particular, if $M$ is unitary and translation invariant, then $\lambda = e^{i \theta}$ for some $\theta \in \mathbb{R}$.
\end{lemma}

\end{document}